\newcommand{\FFF}{\vspace*{\bigskipamount}}
\newcommand{\cC}{\mathcal{C}}
\newcommand{\cE}{\mathcal{E}}
\newcommand{\cO}{O}
\newcommand{\cP}{\mathcal{P}}
\newcommand{\Fuzzy}{Fuzzy }
\newcommand{\remove}[1]{}
\newcommand{\dk}[1]{{\color{red}{#1}}}
\newcommand{\jo}[1]{{\color{blue}{#1}}}
\newcommand{\Paragraph}[1]{\paragraph{#1}}
\newcommand{\polylog}{\text{polylog }}
\newtheorem*{rep@theorem}{\rep@title}
\newcommand{\newreptheorem}[2]{%
\newenvironment{rep#1}[1]{%
 \def\rep@title{#2 \ref{##1}}%
 \begin{rep@theorem}}%
 {\end{rep@theorem}}}
\newtheorem{theorem}{Theorem}
\newtheorem{lemma}{Lemma}
\newtheorem{definition}{Definition}
\newcommand{\ceil}[1]{\left \lceil #1 \right \rceil }
\newcommand{\floor}[1]{\left \lfloor #1 \right \rfloor }
\newcommand{\E}{\mathbb{E}}
\begin{document}
\title{
Fault-Tolerant Consensus in Quantum Networks \vfill
}

\author{  
MohammadTaghi Hajiaghayi \footnotemark[1]
\and Dariusz R. Kowalski \footnotemark[2]
\and Jan Olkowski \footnotemark[1]}

\footnotetext[1]{	Department of Computer Science, University of Maryland, College Park, Maryland, USA. Partially supported by DARPA QuICC, the NSF grant 2218678, and the NSF grant 2114269.}

\footnotetext[2]{	School of Computer and Cyber Sciences, Augusta University, Augusta, Georgia, USA.}




\date{}

\maketitle

\vfill

\begin{abstract}

Fault-tolerant \textit{consensus} is about reaching agreement on some of the input values in  a limited time by 
non-faulty
autonomous processes, despite of failures of processes or communication medium.
This problem is particularly challenging and costly against an adaptive adversary with full information. Bar-Joseph and Ben-Or~\cite{Bar-JosephB98} (PODC'98) were the first who proved an absolute lower bound $\Omega(\sqrt{n/\log n})$  on expected time complexity of consensus in any classic (i.e., randomized or deterministic) message-passing network with $n$ processes succeeding with probability $1$ against such a strong adaptive adversary crashing processes.

Seminal work of Ben-Or and Hassidim~\cite{Ben-OrH05} (STOC'05) broke the $\Omega(\sqrt{n/\log n})$ barrier for consensus in classic (deterministic and randomized) networks 
by employing quantum computing.
They showed an (expected) constant-time quantum algorithm for a linear number of crashes $t<n/3$.

In this paper, we improve upon that
seminal work
by reducing the number of quantum and communication bits
to an arbitrarily small polynomial, and even more, to a polylogarithmic number -- though, the latter in the cost of a slightly larger polylogarithmic time (still exponentially smaller than the time lower bound $\Omega(\sqrt{n/\log n})$ for classic computation).

\vfill 

\noindent
\textbf{Keywords:} 
distributed algorithms, 
quantum algorithms, 
adaptive adversary, 
crash failures, 
Consensus, 
quantum common coin, 
approximate counting
\end{abstract}

\vfill

\thispagestyle{empty}

\setcounter{page}{0} 

\newpage



\section{Introduction}

\remove{Reviewer 1: 
"Second, there are numerous typos and language errors that make reading very hard."

"Also it should be mentioned that the first part of the introduction is almost a copy of the abstract, which is a waste of the reader's time." PROBABLY WE SHOULD REPHRASE IT?
}

\remove{Reviewer 2: The main improvement of this work comes from the graph / gossipping technique, but not the quantum power it self. I think combining the existing tools to achieve the communication-efficient quantum coin is interesting. However, I might not consider it as solving a fundamental difficulty in consensus or quantum computing. (Just to remember that we might want to give better explenation why the problem is important.)
}

{\em Consensus} is about making a common decision among the processes' initial input values in  a limited time by every 
non-faulty
process, despite the faulty behaviour of some of the players. 
Since its introduction by Pease, Shostak and Lamport~\cite{PSL} (JACM'80), who ruled out trivial solutions (such as always deciding on the same bit), fault-tolerant consensus has constantly been among foundation problems in distributed computing. 
This problem has been studied in synchronous/asynchronous and deterministic/randomized computation models and under various fault-tolerant or adversarial models: Fail-Stop (crashes) and Byzantine, Static and Adaptive, Computationally Bounded and Unbounded Adversaries - just to name a few (see Section~\ref{sec:related} for related work). 


While the landscape of Consensus problem under the classic model of computation is well-developed, much less is known if we allow for quantum computation. The seminal work of Ben-Or and Hassidim~\cite{Ben-OrH05} (albeit a short 5-pager in STOC'05) broke the $\Omega(\sqrt{n/\log n})$ rounds time barrier for classic computing by employing quantum computing to further use the power of randomization in distributed computing.
They showed an (expected) constant-time quantum algorithm for a linear number of crashes $t<n/3$, however, the algorithm is inefficient in terms of communication bits and, even more importantly, in terms of the number of quantum bits (qubits), as it uses $\Omega(n)$ of them per process. Since then no algorithm has managed to reduce the quantum resources needed to solve Consensus. Because generating, maintaining and sending quantum bits is extremely costly (today's quantum devices use less than 500 qubits), thus the main question of our paper emerges~naturally:

{\em Could the number of qubits be substantially reduced without harming the time complexity?}

\remove{!OLD!
In this paper, we improve upon the seminal work of Ben-Or and Hassidim~\cite{Ben-OrH05} by reducing the number of qubits to an arbitrarily small polynomial, and even more, to a polylogarithmic number -- though, the latter in the cost of a slightly larger polylogarithmic time (still exponentially smaller than the time lower bound $\Omega(\sqrt{n/\log n})$ for classic computation). Our algorithm  is based on two 
qubit-and-communication efficient algorithmic techniques, which are of their own interest. First technique is a new quantum implementation of a common coin, which in constant time uses an arbitrarily small polynomial number of bits $O(n^{\epsilon})$, amortized per process (for any chosen constant $\epsilon>0$), and results in all non-faulty processes returning the same value with a non-zero (ideally, constant) probability.
Second is a deterministic algorithm for counting, which returns, in constant time, a count of the number of input 1 (or 0, resp.) among the processes, with an error not bigger then the number of failures (which is inevitable, as the 
non-faulty processes
may not have any clue about the input value of the faulty processes). This indeed generalizes upon the state-of-art bounds for recently introduced {\em Fuzzy counting} of Hajiaghayi, Kowalski and Olkowski~\cite{HKO-STOC-2022} (STOC'22), 
who used it to reduce
the number of (classical) bits in classic consensus to $O(\sqrt{n})$, while maintaining the almost-optimal round complexity $O(\sqrt{n}\log{n})$. 
The deterministic Fuzzy Counting algorithm in~\cite{HKO-STOC-2022} 
works in $O(\log^3 n)$ rounds (using $O(\polylog n)$ amortized communication bits per process). 
Depending on the choice of meta-parameters, our protocol can achieve the same bounds as in~\cite{HKO-STOC-2022}, but it is also capable of solving {\em Fuzzy Counting} in a constant number of rounds and arbitrarily small polynomial number of bits per process.
}

\Paragraph{Distributed setting.} 
\remove{Reviewer 2: I think this work consider the full information adversary model [7], while it is not mentioned throughout the paper. If we are not in the full information model, then the results are incomparable to [7].}

We consider a quantum synchronous message-passing model (c.f.,~\cite{AW}), consisting of $n$~synchronous processes (also called players), each with common clock (a clock tick is called a round or a step) and unique id from the known set $\cP=[n]=\{1,\ldots,n\}$.  

Between any pair of processes we assume the existence of a quantum channel being able to transmit reliable\footnote{%
Messages are not lost nor corrupted while in transit.}
messages caring quantum bits, qubits. For the sake of completeness,  we also augment the model by classic point-to-point channels between any pair of processes.
In each round, a process can send (personalized) quantum and classic messages to any selected subset of other processes.
After {\em multicasting messages}, in the same round
a process {\em receives messages} that were just sent to it by other processes, and performs {\em local computation}, which involves both quantum and classic bits.\footnote{%
Local computation also decides what messages to send in the next round and to whom.}
%

Processes are prone to {\em crash failures}, also called {\em fail-stops}. 
A crashed process permanently stops any activity, including sending and receiving messages.

We model crashes as incurred by a full-information {\em adversary} (the same as in~\cite{Bar-JosephB98,Ben-OrH05}) that knows the algorithm, the exact pure quantum state (see Section~\ref{sec:quantum-prel}) and the classic state of the system at any point of an execution, and has an unbounded computational power. The adversary decides which processes to fail and when. The adversary 
is also {\em adaptive} -- 
it can make a decision on-line based on its current full-knowledge of the system. 
However, the adversary does not know the future computation, which means that it does not know future random bits drawn by processes.

As for the quantum part, the adversary can apply no quantum operation to the system, but it is aware of all quantum and classic changes of state that the network undergoes. If a process is crashed by the adversary, we assume that its quantum bits are not destroyed (in particular, entangled qubits in other processes do not collapse but maintain their entanglement), however they cannot be used in 
further~computation.

Failures are {\em not clean} -- when a process crashes when attempting to multicast a message, then some of the recipients may receive the message and some may not; this aspect is controlled by the adversary. 
A {\em $t$-adversary} is additionally restricted by the the number of crashed processes being {\em smaller than~$t$}; if $t=n$ then the $n$-adversary is also called an unbounded adversary (note that even in such case, at least one process must survive for consensus to make sense).
Throughout the paper, we will be calling the adversary described above ``adaptive'',~for~short.

\Paragraph{{\em Consensus} problem:} each process $p$ has its own initial 
value $input_p$ and has to output a (common) decision value, so that the following conditions hold:
{\em validity} -- decision must be on an input value of some process; {\em agreement} --	no two processes decide on different values; and {\em termination} -- each process eventually decides on some value, unless it is faulty.
All those three requirements must hold with probability 1. We focus on {\em binary consensus}, in which initial values are in $\{0,1\}$. 

Correctness and complexity -- in terms of time (the number of rounds needed for all processes to terminate) and the 
number of quantum bits (qubits) and communication bits -- are analyzed and maximized (worst-case) against an adaptive adversary.

We say that a random event occurs \emph{with high probability}
({\em whp} for short), if its probability could be made $1-\cO(n^{-c})$ for any sufficiently large
positive constant $c$ by linear scaling of parameters.

\remove{
First, is the Hadamard gate $H^{k}$. When applied on a quantum state $\ket{00\ldots 0}$ of $k$ qubits, it creates an equal superposition of all the possible
states of $k$ qubits. Formally, this works as follows: $H^{k}\ket{00\ldots 0} = \sum_{x} \frac{1}{\sqrt{2^{k}}} \ket{x}$, where $x$ iterates over all possible binary strings of length $k$. Second is the $CNOT$ gate applied on two qubits. If $\ket{x} = \alpha\ket{0} + \beta\ket{1}$ then $CNOT(\ket{x}, \ket{y}) = \alpha\ket{0}\ket{y} + \beta\ket{1}\ket{y}$. Informally, $\ket{x}$ is the control qubit, if it is $\ket{1}$ then the bit-wise operation applies to the qubit $\ket{y}$, otherwise identity is applied to $\ket{y}$. In general, any quantum operation $U$ we know how to implement can be extended to one controlled by an additional qubit. If $\ket{x}$ is the control qubit then the
controlled operation acts like $U$ when $\ket{x} = \ket{1}$ and it acts like identity when $\ket{x} = \ket{0}$.}
\section{Our Results}
In this work, we focus on improving quantum bits' and communication complexities (without harming time complexity) of quantum algorithms  solving Consensus problem with probability 1 against an adaptive full-information adversary capable of causing 
processes' crashes.
We observe that the maximum, per process, number of communication bits in Consensus problem is $\Omega(n)$, therefore one can only hope to improve amortized communication complexity (per process), see Lemma~\ref{lem:worst-case-comm} in Appendix~\ref{sec:auxiliary}.


Our first main result is a quantum algorithm that solves Consensus (correctness with probability~$1$) in expected constant number of rounds and amortized number of qubits and communication bits per process being an arbitrarily low polynomial. This directly improves, by a  polynomial factor, the result of Ben-Or and Hassidim~\cite{Ben-OrH05}, which required a super-linear number of qubits and communication bits, amortized per process. More precisely, in Section~\ref{sec:consensus-alg} we give an algorithm and in Section~\ref{sec:consensus-analysis}~we~prove the following result (see also its technical counterpart -- Theorem~\ref{thm:consensus-constant-technical}):

\begin{theorem}
\label{thm:constant-rounds-consensus}
For 
any $\epsilon>0$, there is an algorithm 
solving consensus against an adaptive $n/3$-adversary in expected
$O(1)$
rounds 
while using $O(n^{\epsilon})$ qubits and communication bits (amortized) per process,~whp.
\end{theorem}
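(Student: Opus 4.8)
The plan is to combine three ingredients in the spirit of the Ben-Or–Hassidim framework, but made qubit- and communication-frugal: (i) a reduction of consensus to a \emph{common coin} plus \emph{(fuzzy) counting}, following the classical Bar-Joseph–Ben-Or template; (ii) a communication-efficient quantum common coin that, in $O(1)$ rounds and $O(n^{\epsilon})$ qubits/bits amortized per process, makes all non-faulty processes output the same bit with constant probability against an adaptive $n/3$-adversary; and (iii) a deterministic approximate-counting primitive that lets each process estimate, up to an additive error bounded by the number of crashes so far, how many processes currently hold value $0$ versus $1$. Items (ii) and (iii) are exactly the two subroutines the introduction advertises, so I would state them as lemmas (pointing to the technical coin lemma and the fuzzy-counting lemma proved in later sections) and treat them as black boxes here.

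\textbf{Main loop.} Each process maintains a current preference $v_p \in \{0,1\}$, initialized to $input_p$. A phase consists of: (1) run the counting primitive to obtain estimates of the number of $0$'s and $1$'s; (2) run the quantum common coin to obtain a shared bit $c$; (3) update $v_p$ by a threshold rule — if the count overwhelmingly favors one value (more precisely, if even the pessimistic estimate, after subtracting the crash budget $t<n/3$, still exceeds the optimistic estimate of the other value), decide on that value; otherwise set $v_p \leftarrow c$. The $n/3$ bound is what guarantees that, once the coin succeeds \emph{and} no process has yet decided, all non-faulty processes agree on $v_p$ after the update, so the next phase locks in a decision; and it guarantees that if some process decided $b$ in a phase, every non-faulty process at least adopts $v_p=b$ by the end of that phase, giving agreement. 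Validity is immediate because $v_p$ only ever takes values that were inputs or coin outcomes, and a coin outcome is only adopted, never decided, unless the counts certify it. Each phase is $O(1)$ rounds and $O(n^{\epsilon})$ qubits/bits per process; since the coin succeeds with constant probability independently each phase, the expected number of phases is $O(1)$, giving the claimed $O(1)$ expected time and the amortized resource bound. Termination with probability $1$ follows because the failure-free coin-success event recurs with constant probability, so the process terminates in finite time almost surely; a standard "once one decides, all decide within one more phase, then halt" argument (processes that decided keep participating for one extra phase to help others) makes this clean.

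\textbf{Adaptive-adversary subtleties.} The delicate point — and I expect the main obstacle — is that the adversary is adaptive, full-information (knowing the pure quantum state), and crashes are \emph{not clean}, so a crashing process can split its multicast arbitrarily. This breaks naive counting (different survivors see different partial multicasts) and can bias a naive coin. The counting lemma must therefore be robust: its guarantee is only an additive-$t$ error, and the threshold rule in step (3) is designed precisely to tolerate exactly that slack, which is why the adversary bound must be $t<n/3$ rather than $t<n/2$ (we need room for the error on \emph{both} sides plus a strict majority). For the coin, the key property to invoke from the coin lemma is that even though the adversary sees the quantum state, it cannot \emph{apply} quantum operations and cannot predict future measurement outcomes, so adaptively crashing up to $n/3$ processes cannot force the surviving majority's measured coin away from a constant-probability consensus value; the communication-efficient implementation (gossip on a sparse graph rather than all-to-all) is what brings the per-process cost down to $O(n^{\epsilon})$ while still being resilient, and I would cite the corresponding lemma for that. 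Finally I would note the straightforward scaling: shrinking $\epsilon$ only changes hidden constants in the round bound (via the expansion/degree parameters of the gossip graph), so "for any $\epsilon>0$" holds, and the whp qualifier comes from the gossip subroutine's high-probability connectivity/aggregation guarantee, amplified by linear scaling of its parameters.
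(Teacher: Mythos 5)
Your proposal follows essentially the same route as the paper: the Bar-Joseph--Ben-Or phase structure with fuzzy counting plus a threshold rule, falling back to the communication-efficient quantum weak coin in the balanced case, with each phase costing $O(1)$ rounds and $O(n^{\epsilon})$ qubits/bits amortized. The only structural difference is cosmetic (you invoke the coin unconditionally every phase rather than only in the balanced branch), which does not change the asymptotics.

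There is, however, one step you assert too casually: ``since the coin succeeds with constant probability independently each phase, the expected number of phases is $O(1)$.'' The coin's guarantee is conditional --- it produces a common bit with constant probability only if at most a $\frac{1}{3}$-fraction of the processes alive at the start of that coin invocation crash \emph{during} it. An adaptive $n/3$-adversary can dump a large part of its crash budget into any single phase and void that precondition (and similarly corrupt the counting estimates beyond the slack your threshold rule tolerates), so the per-phase success events are neither independent nor uniformly bounded below. The missing ingredient is a budget argument: call a phase \emph{good} if at most a $\frac{1}{10}$-fraction of the currently correct processes crash in it; since the total budget is $n/3$, there are at most a constant number of bad phases, and in every good phase the case analysis (counting error at most $\frac{x}{10}$, coin succeeds with probability at least $\frac{1}{4}$) goes through and yields termination within two further phases with constant probability. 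That accounting is what actually delivers the $O(1)$ expectation; without it the claim does not follow. (Relatedly, probability-$1$ termination should be argued by noting that even a failed coin outputs \emph{some} bit at each process, which occurs with nonzero probability under local coin flips, so the probability-$1$ correctness of the underlying classical framework is inherited.)
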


Although our algorithm, called {\sc CheapQuantumConsensus}, uses an idea of so called weak global coin analogously to e.g.,~\cite{FM,Ben-OrH05}, ours is based on two entirely different qubit-and-communication efficient algorithmic techniques.
First technique is a new quantum implementation of a weak global coin, c.f., Definition~\ref{def:coin}, called {\sc CheapQuantumCoin}. It works in constant, $O\big(\frac{1}{\epsilon}\big)$, time, uses an arbitrarily small polynomial number of bits $O(n^{\epsilon})$, amortized per process (for any chosen constant $\epsilon>0$), and results in all non-faulty processes returning the same value with a non-zero 
constant
probability, c.f., Theorem~\ref{thm:coin}.
Second is a deterministic algorithm for counting, which returns, 
in constant time,
a count of the number of input~1 (or 0, resp.) among the processes. The output could be different at different processes, due to failures and a short time for counting, but each of them is a number not smaller than the number of processes with input~1 (0, resp.) at the end of the procedure and not larger than the number of processes with input 1 (0, resp.) in the beginning of the procedure. It uses only an arbitrary small polynomial number of communication bits, amortized per process.
%
We give an overview of these two techniques in the next Sections~\ref{sec:quantum-coin} and \ref{sec:counting}, respectively, and their analysis in Sections~\ref{sec:coin-analysis} and \ref{sec:fast-counting-analysis}, respectively. 

Although constant-time algorithms cannot achieve low communication, as we argue in Lemma~\ref{prop:constant-time} in Appendix~\ref{sec:auxiliary}, interestingly, we show that our main algorithm could be re-instantiated in such a way that it uses only a polylogarithmic number of qubits and communication to solve consensus in a slightly (polylogarithmically) larger number of rounds (see Section~\ref{sec:consensus-alg} and Appendix~\ref{sec:consensus-analysis}, and technical counterpart Theorem~\ref{thm:consensus-polylog-technical} for more details):

\begin{theorem}
\label{thm:polylogarithmic-consensus}
There is an algorithm 
solving consensus against an unbounded~adaptive adversary in 
polylogarithmic number of
rounds, in expectation, while using 
a polylogarithmic number of
qubits and communication bits (amortized) per process,~whp.
\end{theorem}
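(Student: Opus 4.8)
The plan is to obtain the algorithm of Theorem~\ref{thm:polylogarithmic-consensus} as a re-instantiation of {\sc CheapQuantumConsensus} with its communication--time trade-off parameter pushed down to $\epsilon := c\log\log n/\log n$, so that $n^\epsilon = \log^c n = \polylog n$. With this choice, {\sc CheapQuantumCoin} (Theorem~\ref{thm:coin}) runs in $O(1/\epsilon) = O(\log n/\log\log n)$ rounds using only $\polylog n$ qubits and communication bits amortized per process, and the deterministic approximate-counting subroutine of Section~\ref{sec:counting}, run in its polylogarithmic regime (matching the $O(\log^3 n)$-round, $O(\polylog n)$-bit bound for fuzzy counting in~\cite{HKO-STOC-2022}), likewise costs $O(\polylog n)$ rounds and $O(\polylog n)$ bits amortized per process. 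So both subroutines are simultaneously polylogarithmic in time, qubits, and amortized communication.

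On top of them I would run the standard phase-based fault-tolerant randomized consensus skeleton (in the style of Bar-Joseph--Ben-Or~\cite{Bar-JosephB98} and Ben-Or--Hassidim~\cite{Ben-OrH05}), which is exactly the outer loop of {\sc CheapQuantumConsensus} used for Theorem~\ref{thm:constant-rounds-consensus}: in each phase every non-faulty process multicasts its current preference, the processes invoke approximate counting to estimate how many $0$'s and how many $1$'s are still present, a process decides value $v$ (and keeps relaying it) as soon as its estimate for $v$ exceeds a threshold set relative to the current counted population, and otherwise it adopts the output of {\sc CheapQuantumCoin} as its next preference. \emph{Validity} and \emph{agreement} are inherited deterministically from the ``sandwich'' guarantee of the counting subroutine --- every estimate lies between the count at the end and the count at the start of that invocation --- exactly as in the proof of Theorem~\ref{thm:constant-rounds-consensus}, with the threshold chosen so that no two processes can simultaneously certify opposite values; since this argument refers only to the population surviving a phase, it goes through for any number of crashes $t<n$, which is why the $n/3$ restriction of Theorem~\ref{thm:constant-rounds-consensus} disappears once counting is run in its exact, polylog-round regime. \emph{Termination with probability $1$} and the expected round bound follow because, conditioned on no process having decided yet, with probability at least $1/\polylog n$ (even against an unbounded adaptive adversary) all surviving processes receive the same value from {\sc CheapQuantumCoin}, after which they all decide within one more phase; hence the number of phases is stochastically dominated by a geometric variable with mean $O(\polylog n)$. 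Multiplying by the $O(\polylog n)$ cost of one phase gives the claimed $O(\polylog n)$ expected round complexity, and for the amortized qubit/communication bound: with high probability the number of phases is $O(\polylog n)$ (geometric tail), and each phase charges $O(\polylog n)$ qubits and communication bits amortized per process, so the total is $O(\polylog n)$ per process, \emph{whp}.

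The step I expect to be the main obstacle is establishing the $1/\polylog n$ ``everyone agrees on the coin'' probability \emph{against an unbounded adversary}, as opposed to the $n/3$-adversary of Theorem~\ref{thm:constant-rounds-consensus}. When up to $n-1$ processes may crash, a full-information adaptive adversary can at low cost control which processes ever observe the measurement outcome of the shared quantum coin, so the single-shot aggregation that suffices in constant time no longer yields a success probability bounded away from $0$. The remedy is to perform the coin's state distribution and outcome aggregation over a robust, well-connected communication structure --- the \Gossip/\BipartiteGossip primitive used elsewhere in the paper --- so that after any adaptive removal of processes the survivors still share a consistent view of the coin with probability $\Omega(1/\polylog n)$, at the price of $O(\polylog n)$ extra rounds; this is precisely what forces the polylogarithmic (rather than constant) round count here. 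Quantitatively bounding this degradation --- showing the consistency and bias guarantees of the quantum common coin decay only polylogarithmically, not polynomially, as $t\to n$ --- is the crux; once it is in hand, the rest is the bookkeeping above together with the already-proved Theorem~\ref{thm:coin}, the counting subroutine, and the outer-loop analysis template reused from Theorem~\ref{thm:constant-rounds-consensus}. The precise statement is the technical Theorem~\ref{thm:consensus-polylog-technical}.
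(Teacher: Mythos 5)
Your re-parametrization idea is the right starting point and matches the paper's: the paper instantiates \textsc{FastCounting} and \textsc{CheapQuantumCoin} with $x=2$, $d=\log n$, $\alpha=\log n$ (polylogarithmic-degree communication graphs), giving $O(\log^4 n)$ rounds and $O(\log^7 n)$--$O(\log^8 n)$ amortized bits per phase; your choice $n^{\epsilon}=\log^{c}n$ is essentially the same move. But your proof has a genuine gap exactly where you flag "the crux": you never establish the per-phase success probability against the unbounded adversary, and the remedy you sketch --- strengthening the coin itself so that it succeeds with probability $\Omega(1/\polylog n)$ no matter how many processes crash during its execution --- is both unproven and not what the paper does. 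Theorem~\ref{thm:coin} is used as-is: the coin gives \emph{constant} fairness, but only under the hypothesis that at most a $\frac{1}{3}$-fraction of the processes that were alive at its start crash during its execution. No claim of the form "the coin degrades gracefully to $1/\polylog n$ as $t\to n$" appears anywhere in the paper, and your plan leaves it as an open subproblem.

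The missing idea is an amortization over phases rather than a stronger coin. Call a phase \emph{bad} if the adversary crashes more than a $\frac{1}{10}$-fraction of the processes alive at its start. Each bad phase shrinks the alive population by a constant factor, so even an unbounded adversary can force at most $O(\log n)$ bad phases over the whole execution. In every good phase the hypothesis of Theorem~\ref{thm:coin} holds, the coin succeeds with constant probability (at least $\frac14$), and by Lemma~\ref{lem:same-b} agreement follows within two further phases. Hence the expected number of phases is $O(\log n)$ --- with \emph{constant} per-good-phase success probability, not $1/\polylog n$ --- and multiplying by the $O(\log^4 n)$ per-phase round cost and $O(\log^7 n)$ per-phase communication gives Theorem~\ref{thm:consensus-polylog-technical}. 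A secondary inaccuracy: you attribute the disappearance of the $n/3$ restriction to running counting "in its exact, polylog-round regime," but the counting subroutine has the same sandwich guarantee in both regimes; the restriction disappears purely because of the bad-phase counting argument above. Correctness with probability $1$ is also argued differently than you suggest: the paper observes that any output of \textsc{CheapQuantumCoin} (even when its fairness guarantee fails) is a possible output of local coin flips with non-zero probability, so the probability-$1$ correctness of the Bar-Joseph--Ben-Or framework is inherited directly.
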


We believe that the newly developed techniques could be also applied to other types of failures, after failure-specific modifications. For example, although message omission failures require linear amortized communication per process (c.f.,~\cite{HH}), one could still use a small polynomial or even a polylogarithmic number of qubits (together with a linear number of classic communication bits) per process, if qubits are handled according to our techniques while some additional classic communication bits are introduced to handle message omissions. We leave details to follow-up~work.

\subsection{Previous and Related Work}
\label{sec:related}

\paragraph{Consensus in the classic (non-quantum) model.}

Bar-Joseph and Ben-Or~\cite{Bar-JosephB98} (see also their extended version~\cite{Bar-JosephB02}) proved a lower bound $O(\frac{\sqrt{n}}{\log{n}})$ on expected time complexity of consensus against an adaptive adversary.
They also complemented it by time-optimal randomized algorithm. 
Their algorithm uses expected $O(\frac{n^{3/2}}{\log{n}})$ number of communications bits, amortized per process, which has been recently improved by Hajiaghay, Kowalski and Olkowski~\cite{HKO-STOC-2022} to $O(\sqrt{n})$ (while maintaining the almost-optimal round complexity $O(\sqrt{n}\log{n})$).

Fisher and Lynch~\cite{FL} proved a lower bound $f+1$ on {\em deterministic} consensus with $f$ crashes (that actually occurred, i.e., $f<t$), thus separating deterministic solutions from randomized.
Regarding communication complexity, Amdur, Weber and Hadzilacos~\cite{AWH} showed that 
the amortized number of messages
per process is at least constant,
even in some failure-free execution.
Dwork, Halpern and Waarts~\cite{DHW} found a solution with $\cO(\log n)$ messages per process, but requiring an exponential time, and later Galil, Mayer and Yung~\cite{GMY} developed a message-optimal algorithm working in over-linear $\cO(n^{1+\varepsilon})$ time, for any $0<\varepsilon<1$.
They also improved the communication further to a constant number of communication bits per process,
but the resulting algorithm was exponential in the number of rounds.
Chlebus and Kowalski~\cite{CK} showed that consensus can be
solved in  $\cO(f+1)$ time and with $\cO(\log^2 f)$ messages
if only the number $n-f$ of non-faulty processors satisfies $n-f=\Omega(n)$.
It was later improved in~\cite{CK-PODC09} to $O(f+1)$ time and $O(\polylog n)$ number of communication bits.
All the abovementioned communication complexities are amortized~per~process.

\Paragraph{Quantum consensus.} To the best of our knowledge, almost all previous papers on quantum consensus concentrated on assuring feasibility of the problem against strong Byzantine adversaries, 
c.f.,~\cite{Cholvi22,Helm-PODC08,LFZ-Physics19}, or on optimizing time complexity, including the work of Ben-Or and Hassidim~\cite{Ben-OrH05} achieving constant time against an adaptive adversary.
Sparse quantum communication has been considered by Chlebus, Kowalski and Strojnowski in~\cite{ChlebusKS10}, but their protocols work correctly only with some probability smaller than $1$ and for a specific number of failures corresponding to the probability of success. Another difference is that they used quantum operations to encode the classical inputs in quantum registers and used it to directly solve consensus. In this paper, we show another, more-efficient approach, in which we first create a quantum, weak global coin and later employ this tool to the state-of-the-art framework of solving consensus based on the common coin.
Other distributed computing problems, not necessarily fault-prone, were also analyzed in quantum models, c.f.,~\cite{BT-SIGACT08,TKM-TALG12, van2022framework, WuYao}.

\Paragraph{More efficient classic randomized solutions against {\em weak adversaries}.}

Whenever weaker oblivious adversaries are considered,
randomness itself occurred to be enough in reducing time complexity to a constant. 
Chor, Merritt and Shmoys~\cite{CMS} developed constant-time algorithms for consensus against an {\em oblivious adversary} -- that is, the adversary who knows the algorithm but has to decide which process fails and when before the execution starts.
Their solution, however, requires a linear number of communication bits per process.
Gilbert and Kowalski~\cite{GK-SODA-10} presented a randomized consensus algorithm that achieves optimal communication complexity of $O(1) $ amortized communication bits per process 
while terminating in 
$O(\log n) $ time with high probability (whp for short), as long as the number of crashes $ f < n/2 $. 

\Paragraph{Classic consensus in more demanding models.} 

Dolev and Reischuk~\cite{DR} and Hadzilacos and Halpern~\cite{HH} proved the $\Omega(f)$ lower bound on the amortized message complexity per process 
of deterministic consensus for {\em omission or (authenticated) Byzantine failures}. 
However, under some limitation on the adversary and requiring termination only whp, the sublinear expected communication complexity $O(\sqrt{n}\; \polylog{n})$ per process can be achieved even in case of Byzantine failures, as proved by King and Saia~\cite{KingS11}.
Such limitations are apparently necessary
to achieve
subquadratic time complexity for Byzantine failures, c.f., Abraham et al.~\cite{AbrahamCDNP0S19}.
{\em Asynchrony} also implies large communication --
Aspnes~\cite{Aspnes-JACM-98} proved a lower bound $\Omega(n/\log^2 n)$ on communication complexity per process. The complexity bounds in this setting have been later improved, c.f.,~\cite{AC,AlistarhAKS18}. 

\remove{

\FFF
\Paragraph{Early work on consensus.}
The \emph{Consensus} problem was introduced by Pease, Shostak and Lamport~\cite{PSL}.
Early work focused on {\em deterministic} solutions.
Fisher, Lynch and Paterson~\cite{FLP} showed that the problem is unsolvable in an asynchronous setting, even 
if one process may fail.
Fisher and Lynch~\cite{FL} showed that a synchronous solution requires $f+1$ rounds if up to~$f$~processes~may~crash.


The optimal complexity of consensus with crashes is known with respect to the time and the number of messages (or communication bits) when each among these performance metrics is considered separately. 
Amdur, Weber and Hadzilacos~\cite{AWH} showed that 
the amortized number of messages
per process is at least constant,
even in some failure-free execution.
\remove{%
Dwork, Halpern and Waarts~\cite{DHW} found a solution with $\cO(n\log n)$ messages but requiring an exponential time, and later
Galil, Mayer and Yung~\cite{GMY} developed an algorithm with $\cO(n)$ messages, thus showing that this amount of messages is optimal.
The drawback of the latter solution is that it runs in an over-linear $\cO(n^{1+\varepsilon})$ time, for any $0<\varepsilon<1$.
They also improved the communication to $O(n)$ bits, but the resulting algorithm was exponential in the number of rounds.
Chlebus and Kowalski~\cite{CK} showed that consensus can be
solved in  $\cO(f+1)$ time and with $\cO(n\log^2 f)$ messages
if only the number $n-f$ of non-faulty processors satisfies $n-f=\Omega(n)$.
It was later improved in~\cite{CK-PODC09} to $O(f)$ time and $O(n\; \polylog n)$ number of communication bits.
To summarize, when the number of crashes $f$ could be 
close to
$n$, 
}%
The best deterministic algorithm, given by Chlebus, Kowalski and Strojnowski in \cite{CK-PODC09}, solves consensus in asymptotically optimal time
$\Theta(n)$ and an amortized number of communication bits per process $O(\polylog n)$.

\Paragraph{Efficient randomized solutions against {\em weak adversaries}.}

Randomness proved itself useful to break a linear time barrier for time complexity. 
However, whenever randomness is considered, different types of an adversary generating failures could be considered. 
Chor, Merritt and Shmoys~\cite{CMS} developed constant-time algorithms for consensus against an {\em oblivious adversary} -- i.e., the adversary who knows the algorithm but has to decide which process fails and when before the execution starts.
Gilbert and Kowalski~\cite{GK-SODA-10} presented a randomized consensus algorithm that achieves optimal communication complexity, using $ \mathcal{O}(1) $ amortized communication bits per process 
and terminates in 
$ \mathcal{O}(\log n) $ time with high probability, tolerating up to $ f < n/2 $ crash failures. 

\Paragraph{Randomized solutions against (strong) {\em adaptive adversary}.}
Consensus against an adaptive adversary, considered in this paper, has been already known as more expensive than against
weaker adversaries.
The time-optimal randomized solution to the consensus problem was given by Bar-Joseph and Ben-Or~\cite{Bar-JosephB98}. Their algorithm works in $O(\frac{\sqrt{n}}{\log{n}})$ expected time and uses $O(\frac{n^{3/2}}{\log{n}})$ amortized communications bits per process, 
in expectation. 
They also proved 
optimality of their result with respect to the time complexity, while 
here we substantially improve the communication.

\Paragraph{Beyond synchronous crashes.}
It was shown that more severe failures or asynchrony 
could cause a substantially higher complexity.
Dolev and Reischuk~\cite{DR} and Hadzilacos and Halpern~\cite{HH} proved the $\Omega(f)$ lower bound on the amortized message complexity per process 
of deterministic consensus for {\em (authenticated) Byzantine failures}. 
King and Saia~\cite{KingS11} proved that under some limitation on the adversary and requiring termination only whp, the sublinear expected communication complexity $O(n^{1/2}\polylog{n})$ per process can be achieved even in case of Byzantine failures. 
Abraham et al.~\cite{AbrahamCDNP0S19} showed necessity of such limitations 
to achieve
subquadratic time complexity for Byzantine failures.

If {\em asynchrony} occurs, the recent result of Alistarh
et al.~\cite{AlistarhAKS18} showed how to obtain almost optimal communication complexity $O(n\log{n})$ per process (amortized) if less then $n/2$ processes may fail, which improved upon the previous result $O(n\log^2 {n})$ by Aspnes and Waarts~\cite{AspnesW-SICOMP-96} and is asymptotically almost optimal due to the lower bound $\Omega(n/\log^2 n)$ by Aspnes~\cite{Aspnes-JACM-98}.

\Paragraph{Efficient Fault-tolerant Counting} 
was introduced by Hajiaghay, Kowalski and Olkowski~\cite{HKO-STOC-22}.
Gossip was introduced by Chlebus and Kowalski~\cite{CK}.
They developed a deterministic algorithm solving 
Gossip
in time $\cO(\log^2 f)$ while using $\cO(\log^2 f)$ amortized messages per process, provided $n-f=\Omega(n)$.
They also showed a lower bound  $\Omega(\frac{\log n}{\log(n\log n)-\log f})$ on the number of rounds
in case $\cO(\polylog n)$ amortized messages are used per process.
In a sequence of papers~\cite{CK,GKS,CK-DISC06},
$O(\polylog n)$ message complexity, amortized per process, was obtained for any $f<n$, while keeping the polylogarithmic time complexity.
Note however that general Gossip requires $\Omega(n)$ communication bits per process for different rumors, as each process needs to deliver/receive at least one bit to all non-faulty processes.
%
Randomized gossip 
against an adaptive adversary is doable
w.h.p. in $O(\log^{2}{n})$ rounds 
using $O(\log^{3}{n})$ communication bits per process, for a constant number of rumors of constant size and for $f < \frac{n}{3}$ processes, c.f.,
Alistarh et al.~\cite{AlistarhGGZ10}.
%

\Paragraph{Other quantum consensus results.}
\dk{??? SHOULD WE PUT ANY RELATED WORK, MOSTLY ON BYZANTINE ???}

\remove{A graph is said to be \emph{$\ell$-expanding}, or to be an \emph{$\ell$-expander}, if any two subsets of~$\ell$ nodes each are connected by an edge; this notion was first introduced by Pippenger~\cite{Pip}.
	There exist $\ell$-expanders  of the maximum degree $\cO(\frac{n}{\ell}\log n)$, as can be proved by the probabilistic method; such an argument was first used by Pinsker~\cite{Pin}. 
}

}
\section{Technical preliminaries}
\label{sec:quantum-prel}

\paragraph{Quantum model of computation.}
We focus on quantum properties relevant to our algorithms. A quantum system operates on qubits. Single qubit can be either in a pure or a mixed state. A pure state is a vector in a $2$-dimensional Hilbert space $\mathcal{H}$, while a mixed state is modelled as a probabilistic distribution over pure states. Similarly, a register consisting of $d$ qubits can also be in a pure or a mixed state. A pure quantum state, denoted $\ket{x}$, is a vector of $2^d$-dimensional Hilbert space $\mathcal{H}^{\otimes d} = \mathcal{H} \otimes \ldots \otimes \mathcal{H}$. Again, a mixed state of larger dimension is viewed a probabilistic distribution over pure states. In our paper, we will operate only on pure states, and we will use only the standard computational basis of the Hilbert space, which consists of vectors $\{\ket{b_{1}\ldots b_{d}} : b_{1}\ldots,b_{d} \in \{0,1\}^{d}\}$, to describe the system. Therefore, any state $\ket{x}$ can be expressed as
$\ket{x} = \sum_{i=0}^{2^{d}-1} \alpha_{i}\ket{i}$,
with the condition that $\sum_{i} |\alpha_{i}|^{2} = 1$, since quantum states can be only normalized vectors. 

Transitions, or equivalently -- changes of states of a quantum system, are given by unitary transformations on the Hilbert space of $d$ qubits. These unitary transformations are called {\em quantum gates}. These operations are exhaustive in the sense that any quantum computation can be expressed as a unitary operator on some Hilbert space. There are small-size sets of quantum gates working on two-dimensional space that are universal --
any unitary transformation on a $2^{d}$-dimensional quantum space can be approximated by a finite collection of these universal gates. In our applications, any quantum algorithm computation run by a process requires polynomial 
(in $n$) number of universal~gates.

Finally, an important part of quantum computation 
is also a quantum measurement. Measurements are performed with respect to a basis of the Hilbert space -- in our case this is always the computational basis. A complete measurement in the computational basis executed on a state $\ket{x} = \sum_{i=0}^{2^d - 1}\alpha_{i}\ket{i}$ leaves the state in one of the basis vectors, $\ket{i}$, for $i \in \{0,1\}^{d}$, with probability~$\alpha_{i}^{2}$. The outcome of the measurement is a classic register of $d$ bits, informing 
to which vector the state has been transformed. It is also possible to measure only some qubits of the system, which is called a partial measurement. If $A$ describes the subset of qubits that we want to measure and $B$ is the remaining part of the system, then the partial measurement is defined by the set of projectors $\{ \Pi_{i} = \ket{i}_{A}\bra{i}_{A} \otimes I_{B} \ \, | \ \, \mbox{for } i \in \{0, 1\}^{d}\}$.\footnote{Whenever it could be irrelevant, from the context, we may follow the standard notation in quantum computing and skip writing normalizing factors.} In the former, a subscript refers to the part of the system on which the object exists, $I$ denotes the identity function, while $\bra{i}$ is a functional of the dual space to the original Hilbert space (its matrix representation is the conjugate transpose of the matrix representation of $\ket{i}$).
If before the measurement the system was in a state $\ket{x}_{AB}$ then, after the measurement, it is in one of the states $\{ \Pi_{i} \ket{x}_{AB} \ \, | \ \, \mbox{for } i \in \{0, 1\}^{d}\}$, where state $\Pi_{i} \ket{x}_{AB}$ is achieved with probability $\bra{x}_{AB}\Pi_{i} \ket{x}_{AB}$.\footnote{$\Pi_{i} \ket{x}_{AB}$ and $\bra{x}_{AB}\Pi_{i} \ket{x}_{AB}$ are simply linear operations on matrices and vectors.} The reader can find a comprehensive introduction to quantum computing in~\cite{NielsenQuantum}.

\Paragraph{Graph notations.}
Let $G=(V,E)$ denote an undirected graph.
Let $W\subseteq V$ be a set of nodes of~$G$.
We say that an edge $(v,w)$ of $G$ is \emph{internal for~$W$} if $v$ and~$w$ are both in~$W$.
We say that an edge $(v,w)$ of $G$ \emph{connects the sets~$W_1$ and $W_2$},
or \emph{is between $W_1$ and $W_2$}, for any disjoint subsets $W_1$ and~$W_2$ of~$V$, if one of its ends is in~$W_1$ and the other in~$W_2$.
The \emph{subgraph of~$G$ induced by~$W$}, denoted~$G|_W$, is the subgraph of~$G$ containing the nodes in~$W$ and all the edges internal for~$W$ in $G$.
A node adjacent to a node~$v$ is a \emph{neighbor of~$v$} and the set of all the neighbors of a node~$v$ is the \emph{neighborhood of~$v$}.
$N^i_G(W)$ 
denotes
the set of all the nodes in~$V$ that are of distance at most~$i$ from some node in~$W$ 
in graph~$G$.
In particular, the (direct) neighborhood of~$v$ is denoted~$N_G(v)=N^1_G(v)$. 

The following combinatorial properties are of utter importance 
in the analysis of
our algorithms.
Graph~$G$ is said to be  \emph{$\ell$-expanding}, or to be an \emph{$\ell$-expander}, if any two subsets of~$\ell$ nodes each are connected by an edge.
Graph~$G$ is said to be \emph{$(\ell,\alpha,\beta)$-edge-dense} if, for any set $X\subseteq V$ of \emph{at least} $\ell$ nodes, there are at least $\alpha |X|$ edges internal for~$X$, and for any set $Y\subseteq V$ of \emph{at most} $\ell$ nodes, there are at most $\beta |Y|$ edges internal for~$Y$. 
Graph~$G$ is said to be \emph{$(\ell,\varepsilon,\delta)$-compact} if, for any set $B\subseteq V$ of at least $\ell$ nodes, there is a subset $C \subseteq B$ of at least $\varepsilon\ell$ nodes such that 
each node's degree in~$G|_C$ is at least~$\delta$. 
We call any such set~$C$ a \emph{survival set for~$B$}.

\section{Consensus Algorithm}
\label{sec:consensus-alg}

On a very high level, our consensus algorithm \textsc{CheapQuantumConsensus} repeatedly uses the counting procedure {\sc FastCounting}, specified in Section~\ref{sec:counting}, to compute the number of 0's and 1's preferred by 
processes, see line~\ref{line:fast-counting}. (Recall that the outcomes of {\sc FastCounting} could be slightly different across processes, but by no more than the number of crashes.) Depending on the outcome, each process may change its preferred value to the dominating one (among the received preferred values), decide on it if the domination is substantial, or run the quantum common coin procedure in some almost-unbiased cases -- see lines~\ref{line:decided-one}-\ref{line:decided-zero}~in the pseudocode of \textsc{CheapQuantumConsensus} in Figure~\ref{fig:cons}; all lines involving communication are underlined. 
The latter is a well-established general framework for solving consensus, proposed first by Bar-Joseph and Ben-Or~\cite{Bar-JosephB02} in the context of classic randomized computation against an adaptive adversary. Our novelty is in two new techniques, employed in lines~\ref{line:fast-counting} and \ref{line:quantum-flip} of the pseudocode 
in Figure~\ref{fig:cons}:
fast and quantum/communication-efficient counting 
and 
weak global coin, respectively. Both these techniques use parameters $x, d, \alpha$, which, roughly speaking, correspond to the density of random communication graphs used in these algorithms. 
The detailed performance formulas
of these algorithms, with respect to those parameters, are stated in Theorems~\ref{thm:coin} and~\ref{thm:fuzzy-counting}.

In the heart of these two techniques lies a crucial observation: consensus (as well as common coin and counting) could be achieved 
quickly even if many processes do not directly exchange messages, but use some carefully selected sparse set of communication links instead. This way, 
instead of creating qubits for each pair of processes, we could do it 
only per some pairs corresponding to some communication links to be used. Obviously, this set of links, modeled as an evolving communication graph,
needs to be maintained adaptively and locally by processes throughout the execution -- otherwise, an adaptive adversary would learn it and design crashes to separate processes and prevent consensus. 

\begin{algorithm}
\SetAlgoLined
\SetKwInput{Input}{input}
\Input{$\cP$, $p$, $input_{p}$}
$b_p \leftarrow input_p$ \ \ ; \ \
$r \leftarrow 1$ \ \ ; \ \ $\texttt{decided} \leftarrow FALSE$ \;
\While{$TRUE$}
{
    \ul{%
	participate in \textsc{FastCounting}$(\mathcal{P}, p, b_{p})$ 
	(run with parameters $x = n^{\epsilon}, d = \log{n}, \alpha = n^{\epsilon}$) that counts the processes who have $b_p = 1$ and the processes who have $b_p = 0$; let $O_p^r$, $Z_p^r$ be the numbers of ones and zeros (resp.) returned by \textsc{FastCounting}}\; \label{line:fast-counting}
	$N_p^r \leftarrow Z_p^r + O_p^r$\;
	
	\If{$(N_p^r < \sqrt{n/ \log n })$}
	{
	    \ul{$1)$ send $b_p$ to all processes; $2)$ receive all messages sent to $p$ in round $r+1$}\; 
	    \ul{$3)$ implement a deterministic protocol for $\sqrt{n/ \log n}$ rounds}\;
	}
	
	\If{$\texttt{decided} = TRUE$}
	{
	    \texttt{diff} $\leftarrow$ $N_p^{r-3}$ - $N_p^r$\;
        \lIf{(\texttt{diff} $\leq N_p^{r-2}/10$)}{STOP}
        \lElse{\texttt{decided} $\leftarrow FALSE$}
	}
	\lIf{$O_p^r > (7N_p^r-1)/10$}{$b_p \leftarrow 1$, \texttt{decided} $\leftarrow TRUE$} \label{line:decided-one}
	\lElseIf{$O_p^r > (6N_p^r-1)/10$}{$b_p \leftarrow 1$} \label{line:undecided-one}
	\lElseIf{$O_p^r < (4N_p^r-1)/10$}{$b_p \leftarrow 0$, \texttt{decided} $\leftarrow TRUE$} \label{line:undecided-zero}
	\lElseIf{$O_p^r < (5N_p^r-1)/10$}{$b_p \leftarrow 0$} \label{line:decided-zero}
	\lElse{\ul{set $b_p$ to the output of \textsc{CheapQuantumCoin}$(\mathcal{P}, p)$ executed with parameters $d = \log{n}$, $\alpha = n^{\epsilon}$} }\label{line:quantum-flip}
	
    $r \leftarrow r + 1$\;
}

\Return{$b_{p}$ \tcc*[r]{$p$ outputs final decision}}
\caption{\textsc{CheapQuantumConsensus} for process $p$\label{fig:cons}}
\end{algorithm}

\noindent \textbf{Algorithm's description.} Each process $p$ stores its current choice in $b_{p}$ (which is initialized to $p$'s input). The value $b_{p}$ in the end of the algorithm indicates $p$'s decision. Now, processes use $O(1)$ (in expectation) \textit{phases} to update their values $b_{p}$ such that eventually every process keeps the same decision. 
To do so, in a round $r$ every process $p$ calculates the number of processes whose current choice is $1$ and the number of processes whose current choice is $0$, denoted $O_{p}^{r}$ and $Z_{p}^{r}$ respectively.
Based on these numbers, process~$p$: either sets $b_{p}$ to $1$, if the number $O_{p}^{r}$ is large enough; or it sets $b_{p}$ to $0$, if the number $Z_{p}^{r}$ is large; or it replaces $b_{p}$ with a random bit, if the numbers of zeros and ones are close to each other. If for generating the random bit, in line~\ref{line:quantum-flip}, processes use a quantum implementation of a weak global coin (implemented with \textsc{CheapQuantumCoin} algorithm, specified in Section~\ref{sec:quantum-coin}), they will all have the same value $b_{p}$ with constant probability unless more than third of alive
processes crash. Assuming the presence of the adaptive adversary, this could not be achieved quickly if using
classic communication only. Once it happens with the help of the quantum weak global coin, 
the conditional statements in lines~\ref{line:decided-one}-\ref{line:decided-zero}, 
run in the next iteration of the ``while'' loop, guarantee
that once the majority of processes have the same value $b_p$, the system converges to this value in at most $2$ phases. Since the probability of this event is constant (guaranteed by the quantum weak global coin combined with the analysis of the first classic framework in~\cite{Bar-JosephB02}), the expected number of phases before the consensus algorithm terminates is constant. That reasoning holds, assuming that at most $1 / 3$ fraction of processes crashed (we will generalize it to any $t\le n$ at the end of this section).

As mentioned earlier, the major improvement in the above protocol comes from using novel techniques for counting and weak global coin. For the 
former,
we use the \textsc{FastCounting} algorithm, which, with the choice of parameters given in line~\ref{line:fast-counting}, works in $O\Big(\big(\frac{1}{\epsilon}\big)^{4}\Big)$ rounds and uses $O\big(n^{1 + 3\epsilon}\log^{2}{n}\big)$ (classic) communication bits in total. Similarly, the \textsc{CheapQuantumCoin} algorithm, 
executed in
line~\ref{line:quantum-flip}, terminates in $O\Big(\big(\frac{1}{\epsilon}\big)^{3}\Big)$ rounds and uses $O\Big(n^{1 + 2\epsilon}\log^{2}{n}\Big)$ both quantum and classic bits; we need to divide the 
communication 
formulas by $n$ to obtain the complexity amortized per process. 
By applying these two techniques in the fashion described above, we get Theorem~\ref{thm:constant-rounds-consensus}; detail proof is deferred to Appendix~\ref{sec:consensus-analysis}.

\Paragraph{Handling arbitrary number of crashes.}

Consider $O(\log{n})$ repetitions of the main loop (phases) of the \textsc{CheapQuantumConsensus} algorithm. If during
these phases, the processes with value $b_{p} = 1$ become a large majority (at least $\frac{6}{10}$ fraction of alive
processes), then, as discussed before, every process will decide within the next two rounds. The same holds if processes with value $b_{p} = 0$ start to overpopulate by a ratio of $\frac{6}{10}$ all non-faulty processes. On the other hand, if the cardinalities of the two groups with different values $b_p$ are 
close to each other,
then the processes execute the \textsc{CheapQuantumCoin} algorithm. It outputs a random bit (the same in every participating process), under the assumption that at least a $\frac{2}{3}$ fraction of processes that started this phase as non-faulty have not crashed during this phase. However, in these $O(\log{n})$ phases there must be at least one phase in which the property of a $\frac{2}{3}$ fraction of processes survive holds. In what follows, we argue that if the adversary can crash arbitrarily many processes, but smaller than $n$, 
then the expected number of phases should still be $O(\log{n})$. Now, to obtain the algorithm stated in Theorem~\ref{thm:polylogarithmic-consensus}, we make two more 
adjustments
of the original \textsc{CheapQuantumConsensus} algorithm.
In lines~\ref{line:fast-counting} and \ref{line:quantum-flip}, processes execute the algorithms \textsc{FastCounting} and \textsc{CheapQuantumCoin}, respectively, with parameters $x, d, \alpha$ set as follows: $x = 2, d = \log{n}, \alpha = \log{n}$. 
This corresponds to a use of sparse graph for communication (of degree roughly $O(\log{n})$). In consequence, the time complexity of the \textsc{FastCounting} algorithm increases to $O(\log^{4}{n})$, but the communication complexity decreases to $O(\log^{8}{n})$ amortized per process.
The details are presented in Section~\ref{sec:counting}, and performance follows from putting the abovementioned parameters to the formulas in Theorem~\ref{thm:fuzzy-counting}). Similarly, the time complexity of the \textsc{CheapQuantumCoin} algorithm increases to $O(\log^{3}{n})$, but the communication complexity (both quantum and classical) decreases to $O(\log^{7}{n})$ amortized per process.
The details are presented in Section~\ref{sec:quantum-coin}, and performance follows from putting the abovementioned parameters to the formulas in Theorem~\ref{thm:coin}).

\section{Qubit-and-Communication Efficient Quantum Common Coin}
\label{sec:quantum-coin}

In this section, we design a new 
$t$-resilient weak global coin, for $ t < n$, with the help of quantum communication and computation. 
The coin must satisfy the following definition:

\begin{definition}[\cite{Ben-OrH05}] 
\label{def:coin}
Let $\cC$ be a protocol for $n$ players (with no input), where each player $i$ outputs a (classical) bit $v_i \in \{0, 1\}$. We say that the protocol $\cC$ is a $t$-resilient weak global coin protocol (or computes a weak global coin, for short) with fairness $\rho > 0$, if for any adaptive $t$-adversary and any value $b \in \{0,1\}$, with probability at least $\rho$, $v_i = b$ for all good players $i$.
\end{definition}

On the high level, our protocol {\sc CheapQuantumCoin} chooses a leader process uniformly at random and all other processes agree on the random bit proposed by the leader. Quantum phenomena is used to hide the random choices of the leader and its output from the adaptive adversary when processes communicate with each other. The idea was first proposed in~\cite{Ben-OrH05}, yet there are key differences between that work and our algorithm. Instead of all-to-all communication, which required large number of qubits, we use a sequence of random graphs of node degrees $d, d\alpha^{1}, \ldots, d{\alpha}^{k}$, respectively, where $d,\alpha \in \Omega(\log{n})$ and $k = \ceil{\log{n} / \log{\alpha}}$ 
are some integer parameters. The vertices of these graphs correspond to processes and edges correspond to communication link -- each process communicates with neighbors in one of the graphs at a time. 
If the graph is chosen properly (i.e., so that there is no situation in which too many processes use denser graphs), it reduces the communication complexity, but simultaneously imposes a new challenge.
Mainly, the communication procedure has to now assure delivery of quantum bits between every two non-faulty processes regardless of the pattern of crashes. For instance, if only one random graph of degree $d$ was used then the adversary could easily isolate any vertex using only $O(d)$ crashes (i.e., by crashing all its neighbors). 
Hence, strictly speaking, assuring such delivery is not possible while using a sparse communication graph as relays, but we show that a certain majority could still come up with a common coin value based only on their exchanges with neighbors in the communication graphs; they could later propagate their common value to other processes by adaptively controlling their (increasing) set of neighbors, taken from subsequent communication graphs of increasing density.
A thorough analysis shows that in this way it is possible to achieve the same quantum properties
that are guaranteed by
Ben-Or's and Hassidim's global coin~\cite{Ben-OrH05}, and at the same time reducing the quantum communication by a polynomial factor. Formally, we prove the following~result.

\remove{
\begin{theorem}
\label{thm:coin}
For two integers parameters $d, \alpha \in \Omega(\log{n})$, the algorithm \textsc{CheapQuantumCoin} generates a 
weak global coin with fairness $\frac{1}{4}$,
if at most $\frac{1}{3}$-fraction of initially non-faulty processes crash during its execution. It always terminates in $O\left(\left(\frac{\log{n}}{\log{\alpha}}\right)^{3}\right)$ rounds and, with high probability, uses $O\left(\left(\frac{\log{n}}{\log{\alpha}}\right)^{4} d\alpha^{2}\cdot \log{n}\right)$ both quantum and classic communication bits amortized per process.
\end{theorem}

Sparse quantum communication has been considered by Chlebus, Kowalski and Strojnowski~\cite{ChlebusKS10}, however, we propose several novelties compared to their work. First, we use sparse quantum communication to construct weak global coin rather than to solve consensus directly (with correctness holding only with constant probability). In consequence, our protocol is more versatile, since it uses random, construable graphs and does not require the knowledge of neighbors. 
This should be contrasted with the existential result in~\cite{ChlebusKS10} that required knowledge of neighbors. Second, our protocol adapts to any number of crashes $t < n$ and satisfies weak global coin conditions, assuming a fraction of $\frac{2}{3}$ \textit{initially} non-faulty processes worked correctly all the time. Their Consensus protocol was developed for $t < \frac{n}{3}$ non-faulty processes.
Lastly, our another contribution is the $d, \alpha$ parametrization. When $d = \alpha = \log{n}$, our protocol subsumes the one in~\cite{ChlebusKS10} (in terms of communication pattern), but for $\alpha \in \omega(\log{n})$ it outperforms their running time and eventually allows us to implement a weak global coin in constant time with sub-linear (in fact, arbitrarily low polynomial) communication complexity per process. The last result improves the communication complexity by the current state-of-art quantum weak global coin of Ben-Or and Hassidim~\cite{Ben-OrH05} by a polynomial factor.
}
\Paragraph{Algorithm's description.}
We now describe the 
{\sc CheapQuantumCoin} algorithm. Its pseudo\-code
is presented in Figure~\ref{fig:quantum-coin}. It takes as an input: a process name $p$ and two integers, $d$ and $\alpha$. The two latter parameters are used to determine communication pattern between non-faulty~processes, and their choice determines complexity of the algorithm.

\jo{


}

As mentioned before, processes use quantum equivalent of a procedure in which processes draw random names from the range $[1, \ldots, n^{3}]$ and decide on a random bit proposed by the process with the largest name.\footnote{%
Note that the latter procedure cannot be used against an adaptive adversary, as it could crash such a leader.}
We view the quantum part of the algorithm as a quantum circuit on the joint space of all qubits ever used by different processes. 
Due to the distributed nature of the system, not all quantum operations are allowed by the quantum circuit. That is,
(i) any process can perform arbitrary unitary gates on its part of the system, (ii) unitary gates on qubits of different processes might be performed, but must be preceded by quantum communication that send qubits involved in the operation to a single process. The communication can be either direct or via relays. 
We next explain what unitary gates can be used to simulate the classic algorithm of the leader election in the quantum distributed setting. For the purpose of explanation, we assume that the qubits needed to execute each of the following gates have been delivered to proper processes. The pattern of communication assuring this delivery
is described in the next paragraph.
\begin{enumerate}
    \item \texttt{Hadamard gate.} This unitary operation on a single qubit is given by the matrix $H = \frac{1}{\sqrt{2}}\begin{bmatrix} 1 & 1 \\ 1 & -1  \end{bmatrix}$. When applied to every qubit of an $m$-qubit state $\ket{0\ldots0}$, it produces a uniform superposition of vectors of the computational basis, i.e.,
    $$H^{\otimes m} \ket{0\ldots0} = \frac{1}{\sqrt{2} \cdot n^{3/2}} \sum_{\ell_{1},\ldots,\ell_{m - 1} \in \{0,1\}^{m - 1}, b \in \{0,1\}} \ket{\ell_{1}\ldots \ell_{m} b} = \frac{1}{2^{m/2}} \sum_{i = 0}^{2^m-1} \ket{i} \ .$$
    In the beginning, every process applies this gate to its main register $\ket{Leader}\ket{Coin}$ consisting of $3\log + 1$ qubits, (line~\ref{line:init}). This way the entire system is in the following state:
    $$\left(\frac{1}{\sqrt{2} \cdot n^{3/2}} \sum_{i = 0}^{2n^3 - 1} \ket{i} \right)^{\otimes n} = 
    \left(\frac{1}{\sqrt{2} \cdot n^{3/2}} \sum_{i = 0}^{2n^3 - 1} \ket{i} \right) \otimes \ldots \otimes \left(\frac{1}{\sqrt{2} \cdot n^{3/2}} \sum_{i = 0}^{2n^3 - 1} \ket{i} \right).$$
    Observe, that if all processes measure the main registers in the computational basis, they the outcome has the same probability distribution as a random experiment in which every process draws a number from uniform distribution over $[1, \ldots, n^3]$ (first $3\log{n}$ qubits of each register) and a uniform random bit (the last qubit of each register).

    \item \texttt{CNOT} and \texttt{Pairwise\_CNOT.}
    The $CNOT$ is a 2-qubit gate given by the following unitary transform
    $$CNOT\left(\alpha\ket{00} + \beta\ket{01} + \gamma\ket{10} + \delta\ket{11}\right) =  \alpha\ket{00} + \beta\ket{01} + \gamma\ket{11} + \delta\ket{10}.$$
    The $\texttt{Pairwise\_CNOT}$ gate works on a register of $2\cdot m$ qubits. Let $A$ be the first half of the register while $B$ is the second. The gate applies the $CNOT$ gate qubit-wisely to corresponding pairs of qubits of $A$ and $B$. We only use this gate when the part $B$ of the entire register
    is in the state $\ket{0\ldots0}$. 
    If the part $A$ of the enitre register is in a state $\sum_{i = 0}^{2^m-1} \alpha_{i}\ket{i}_{A}$ then the unitary gate results in
    $$\texttt{Pairwise\_CNOT}\left(\sum_{i = 0}^{2^m-1} \alpha_{i}\ket{i}_{A}, \ket{0\ldots0}_{B}\right) =   \sum_{i = 0}^{2^m-1} \alpha_{i}\ket{i}_{A}\ket{i}_{B} \ .$$
    Whenever a process decides to send its qubits to another process,
    it performs the $\texttt{Pairwise\_CNOT}$ gate on its qubits and the new block of $3\log{n} + 1$ qubits initialized to $\ket{0\ldots0}$. Since, as a result, the information needed to be propagated is now also encoded on another $3\log{n} + 1$ qubits, the processes can send the new block to another part keeping the original qubits for itself. Observe, however, that this is not a universal copy gate (which is not possible in quantum model of computation), since it does not copy the part $A$ to the part $B$, but rather achieves a specific form of entanglement that encodes the same information if described in the computational~basis.
    
    \item \texttt{F\_CNOT} and \texttt{Controlled\_Swap}.
    Let $f : \{0,1\}^{m} \rightarrow \{0, 1\}$ be a Boolean function. Let $A$ be a register of $m$ qubits and $C$ be an additional single qubit register. The gate $\texttt{F\_CNOT}_{f}$ performs $NOT$ operation on the last qubit iff $f(A) = 1$. If $C = \ket{0}$, then we get 
    $$\texttt{F\_CNOT}_{f}\left(\sum_{i = 0}^{2^m-1} \alpha_{i}\ket{i}_{A}, \ket{0}_{C}\right) = \sum_{i = 0}^{2^m-1} \alpha_{i}\ket{i}_{A}\ket{f(i)}_{C} \ .$$
    The gate can be implemented using $2^{m}$ unitary transforms. For each vector $x$ from the computational basis such that $f(x) = 1$, we apply $NOT$ on the last qubit and then compose at most $2^{m}$ such gates. We note here that we never use the $\texttt{F\_CNOT}$ gate on registers of more than $3\log{n} + 1$ qubits, therefore the \texttt{F\_CNOT} has polynomial size.
    
    The $\texttt{Controlled\_Swap}$ operates on a control register $C$ and two registers of the same size, $A$~and~$B$. It swaps the content of two registers of the same size if the control register is $\ket{1}$. Formally, the gate works as follows: 
    $$\texttt{Controlled\_Swap}\left(\ket{i}_{A}, \ket{j}_{B}, \alpha\ket{0} + \beta\ket{1}\right) =  \alpha \cdot \ket{i}_{A}\ket{j}_{B}\ket{0} + \beta \cdot \ket{j}_{A}\ket{i}_{B}\ket{1} \ .$$
    
    We use these two gates whenever a process receives qubits of another process. First, a control qubit $\ket{S}$ is prepared by comparing the content of the received register and the main register of the process. The condition function used in the gate is the following:
    $$f : \{0,1\}^{3\log{n} + 1} \times \{0,1\}^{3\log{n} + 1} \rightarrow \{0, 1\} \ \ , \ \  f(i, j) \iff i > j \ .$$ 
    Next, the qubit $\ket{S}$ is passed to the $\texttt{Controlled\_Swap}$ gate, operating again on the main register and the newly received qubits. For the $\ket{1}$ part of the control qubit, which corresponds to the fact that the received register has larger values in the computational basis, the gate switches the content of the two registers.
\end{enumerate}

Let us now explain how all the gates listed above work together. As mentioned in the description of the Hadamard gate, after line~\ref{line:init} ends, the main registers of the system are in the state being a uniform superposition of all vectors of the computational basis.
Starting from this point, the composition of all gates applied to different registers along the execution can be viewed as a single unitary gate on the entire system, consisting of the qubits that any processes ever created. Note that the unitary transformation might be different depending on the failure in communication, i.e., a failure in delivery of some block of qubits between two processes may result in abandoning gates involving these qubits, but for 
a moment let us assume that the links are reliable. Since the unitary transformation is linear, it is enough to consider how it affects the vectors of the computational basis. However, all the gates described above behave in the computational basis as their classic equivalents. More precisely, let $\ket{x}$ be a vector from the computational basis spanning the whole circuit. Let $p$ be the process whose main register has the largest\footnote{The probability of a tie is polynomially small} value in the state $\ket{x}$. From the point of view of this register, the following happens in the algorithm. In each round, $p$ creates an entangled state on $6\log{n} + 2$ qubits (see point $(2)$) that has the same qubits on its new block of $3\log{n} + 1$ qubits as it has on the main register. Then, it propagates the new block to its neighbors (line~\ref{line:copy}). The neighbors compare the content of received qubits and exchange them with their main register if their content is smaller (gates $\texttt{F\_CNOT}$ and $\texttt{Controlled\_Swap}$ in lines~\ref{line:quantum-flip}). 
This operation is then repeated
$(k + 2)^{2}(\gamma_{\alpha} + 1)$ on the set of links defined by some random evolving graphs, see the later paragraph about adaptive communication pattern. 
In the end, the processes who, either directly or via relays, received the content of the largest main register, have the same value in their main register. Therefore, the result of the measurement in line~\ref{line:measurement} must be the same in all these~processes.  

Assume now that we are able to achieve bidirectional quantum communication between any pair of processes of an $\alpha$-fraction of the entire system, regardless of the (dynamic) actions of the adversary. From the above, the algorithm transforms any vector whose largest main register is one of the registers of the $\alpha$-fraction to a vector such that processes from the $\alpha$-fraction have the same values in main registers. Connecting the above discussion with the fact the algorithm is a $1-$to$-1$ transformation on the linear space of states, we get that the probability of measuring the same values in the processes of the $\alpha$-fraction is at least the probability of measuring the largest value in one of the processes belonging to the $\alpha$-fraction, if the measurement was done before the quantum communication. Since the state is a uniform superposition at that time, 
the probability is at least $\alpha - o(1)$ and we can claim the following lemma.\footnote{The $o(1)$ part contributes to the unlikely event that there are ties between largest values.}

\begin{lemma}\label{lem:chance-win}
Let $A$ be a set of correct processes such that any pair of them was connected by quantum communication either directly or by relays. Then the probability that all processes from $A$ output the same bit from the algorithm $\textsc{CheapQuantumConsensus}$ is at least $\frac{|A|}{n} - o(1)$. 
\end{lemma}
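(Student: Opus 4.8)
The plan is to reduce the claim to an essentially classical statement — that ``propagate the largest name'' along the communication structure used by \textsc{CheapQuantumCoin} makes everyone in a well-connected set agree — by exploiting that the post-Hadamard part of the circuit acts as a permutation on computational basis states and that, before the final measurement, the adaptive adversary learns nothing about the hidden leader. First I would fix an arbitrary adaptive adversary together with a value of its randomness; this determines the crash pattern round by round, hence exactly which gates of Figure~\ref{fig:quantum-coin} are executed (a crash only deletes gates acting on undelivered qubits). Let $U$ be the composition of all gates applied \emph{after} the Hadamard layer of line~\ref{line:init}. Each such gate — \texttt{Pairwise\_CNOT}, \texttt{F\_CNOT}, \texttt{Controlled\_Swap} — restricted to the subspace in which every freshly allocated ancilla block is $\ket{0\ldots 0}$, sends a computational basis vector to a computational basis vector and is injective there; composing, $U$ on that subspace is an injective map $\ket{x}\mapsto\ket{y_x}$ between basis states (the ``$1$-to-$1$ transformation'' already noted in the text), and on a basis state it simply runs the classical procedure: a process copies the current content of its main register into a fresh $3\log n+1$ qubit block, forwards it along the current edges (line~\ref{line:copy}), and a receiver overwrites its main register exactly when the incoming value is strictly larger.

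Next I would move the measurement to the front. After line~\ref{line:init} the global state is $\ket{\psi_0}=\tfrac{1}{\sqrt D}\sum_x\ket{x}$, a uniform superposition over the $D=(2n^3)^n$ assignments $x$ of a name (from $n^3$ possible values) and a bit to each process, with ancillas at $\ket{0}$. Since no measurement takes place before line~\ref{line:measurement}, at every earlier round the pure state is a fixed vector determined only by the crashes so far, independent of the values recorded in $\ket{\psi_0}$; consequently the adversary's entire view, the set of processes that remain correct, the induced quantum-communication structure, and hence the set $A$ and whether it satisfies the connectivity hypothesis, are all independent of that $x$. I would then condition on the whole execution except the final measurement, which fixes $U$ and $A$. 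Because $U$ sends the relevant basis states to \emph{distinct} basis states, $U\ket{\psi_0}=\tfrac1{\sqrt D}\sum_x\ket{y_x}$ is again a uniform superposition over $D$ orthonormal states, so measuring (even only the registers of $A$) in line~\ref{line:measurement} assigns each basis outcome a probability proportional to the number of $x$ that map to it; equivalently, the output distribution is unchanged if one measures the main registers immediately after line~\ref{line:init}, obtaining a uniformly random $x$, and then applies $U$ deterministically.

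Finally I would run the classical argument in this measure-early experiment. Fix $x$ and let $p^\star$ be the process holding the largest name in $x$. By the hypothesis that every pair in $A$ is connected by quantum communication, directly or through relays, together with the forwarding rule above, $p^\star$'s value reaches every process of $A$, and being the global maximum it is never overwritten once delivered; hence whenever $p^\star\in A$ and the maximum name is unique, every process of $A$ holds $p^\star$'s name-and-bit at line~\ref{line:measurement} and therefore all output the \emph{same} bit. Over the uniform $x$, the probability that two processes draw equal names is at most $\binom{n}{2}n^{-3}=o(1)$, and conditioned on a unique maximum that maximum lies in $A$ with probability exactly $|A|/n$ by symmetry, so $\Pr[p^\star\in A\text{ and unique}]\ge |A|/n-o(1)$; thus $\Pr[\text{all of }A\text{ output the same bit}]\ge |A|/n-o(1)$ conditionally on each execution, and hence unconditionally.

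I expect the measure-early reduction to be the delicate point: one must argue carefully that, because no intermediate measurement is performed, the adaptive adversary can never correlate its crashes with the hidden leader, and that the basis-to-basis (injective) structure of $U$ makes deferring that measurement to the start harmless. The remaining steps are then routine gate bookkeeping and an elementary symmetry estimate.
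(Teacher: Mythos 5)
Your proposal is correct and follows essentially the same route as the paper, which justifies this lemma via the preceding discussion: the post-Hadamard gates act as an injective (classical, max-propagation) map on computational basis states, so the probability of measuring an all-agree outcome in $A$ is at least the probability that the unique maximum name lies in $A$ under the initial uniform superposition, i.e., $|A|/n - o(1)$. Your write-up is in fact more careful than the paper's informal argument on the two delicate points you flag — the measure-early equivalence and the independence of the adversary's crash pattern from the hidden leader — but it is the same proof.
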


\SetAlFnt{\small}
\begin{algorithm}
\SetAlgoLined
\SetKwFor{ForAll}{for each}{do (in one communication round)}{end}
\SetKwInput{Input}{input}
\SetKwInput{Output}{output}
\Input{$p$, two parameters: $d, \alpha$}

For $0 \le i \le \ceil{\frac{\log(n / d)}{ \log{\alpha}}}$: $\mathcal{N}_{p}(d\alpha^{i}) \leftarrow $ a set of processes such that each process is chosen independently with probability $\frac{d\alpha^{i}}{n}$ \;
$\texttt{degree}_p \leftarrow d$, \hspace{1.5mm} $\gamma_{\alpha} \leftarrow \frac{\log{n}}{\log{\alpha}}$, \hspace{1.5mm} $\delta_{\alpha} \leftarrow \frac{2}{3}\log{n}$  \label{line:init}\;

$\ket{Leader}_{p}\ket{Coin}_{p} \leftarrow H^{\otimes{n}}\ket{00\ldots0}$ (a gate on $3\log{n} + 1$) \; 


\For{$i = 1$ to $(k+2)^{2}$
\label{line:epochs} \tcc*[r]{iter. of epochs}}
{
    $\texttt{adaptive\_degree} \leftarrow \texttt{degree}_{p}$  \;\label{line:copy}
    \For{$j = 1$ to $\gamma_{\alpha} + 1$ \label{line:testing} \tcc*[r]{iter. of testing rounds}}
    {
           \underline{send to each process in $\mathcal{N}_{p}(\texttt{degree}_{p})$: an inquire bit $1$} \; \label{line:inquires}
        $\mathcal{I} \leftarrow$ the set of processes who sent an inquire bit to $p$ \;
        $\forall_{q \in \mathcal{I}} : \ket{B_{q}} \leftarrow \texttt{Pairwise\_CNOT}\left(\ket{LeaderCoin}_{p}, \ket{0\ldots0}\right)$ \;
        \underline{send to each process $q \in \mathcal{I}$: a quantum message containing first $\log^3 n$ bits of $\ket{B_{q}}$,} \underline{and a classical message containing $\texttt{adaptive\_degree}_{p}$ } \; \label{line:responses}
        \BlankLine
        \BlankLine
        $\mathcal{R} \leftarrow$ the set of processes who responded to $p$'s inquires \;
        \ForEach{$q \in \mathcal{R}$}
        {
            $\ket{CLeader}_{q}\ket{CCoin}_{q} \leftarrow$ received quantim bits from $q$, \hspace{1mm} $\ket{S} \leftarrow \ket{0}$\;
            $\texttt{F\_CNOT}_{Leader_{p} > CLeader_{q}}\left(\ket{Leader}_{p}, \ket{Leader}_{q}, \ket{S} \right)$\; \label{line:quantum-updates-begin}
            $\texttt{Controlled\_Swap}\left(\ket{LeaderCoin}_{p}, \ket{CLeaderCCoin}_{q}, \ket{S}\right)$\; \label{line:quantum-updates-end}
        }

        \BlankLine
        \BlankLine
        \While{\label{line:while}$|\{q \in \mathcal{R} : \texttt{adaptive\_degree}_{q} \ge \texttt{adaptive\_degree}_{p} \}| < \delta_{\alpha}$ 
        and $\texttt{adaptive\_degree}_{p} \ge d$ 
        \tcc*[r]{adapting 
        \#neighbors
        during testing} \label{line:while-coin}}
        {
            $\texttt{adaptive\_degree}_{p} \leftarrow \frac{1}{\alpha} \texttt{adaptive\_degree}_{p}$ \;
        }
    }
    \uIf{\label{line:if}$\texttt{adaptive\_degree}_{p} < \texttt{degree}_{p}$}
    {
        $\texttt{degree}_{p} \leftarrow \min\{\texttt{degree}_{p} \cdot \alpha, d\alpha^{k}\}$ 
        \tcc*[r]{neighborhood~for~next~epoch grows}
    }
}
$b_{p} \leftarrow$ be the last bit the measurement of $\ket{Leader}_{p}\ket{Coin}_{p}$ in the computational basis\;\label{line:measurement}
\Return{$b_{p}$ 
    \tcc*[r]{$p$ outputs random bit}}
\caption{\textsc{CheapQuantumCoin} for process $p$ \label{fig:quantum-coin}
}
\end{algorithm}

\Paragraph{Adaptive communication pattern.} 
As explained, 
we not only need that the communication should 
be efficient in terms of the number of qubits and classic bits, but also it should be such that any two correct processes of a large fraction of the entire system are connected by a short path of correct process so that quantum registers can be relayed.
Let $d, \alpha$ be two integers parameters. We define $k = \ceil{\frac{\log(n / d)}{ \log{\alpha}}}$, $\gamma_{\alpha} = \frac{\log{n}}{\log{\alpha}}$, and $\delta_{\alpha} = \frac{2}{3}\alpha$. Initially, each process $p$ draws independently $k + 1$ sets $\mathcal{N}_{p}(d), \mathcal{N}_{p}(d\alpha^{1}), \ldots, \mathcal{N}_{p}(d\alpha^{k})$, where a set $\mathcal{N}_{p}(d\alpha^{i})$, for $0 \le i \le k$, includes each process from $\cP$ with probability $\frac{d\alpha^{i}}{n}$.


Communication is structured into $(k + 2)^{2}$ \textit{epochs}, see line~\ref{line:epochs}. Each epoch consists of $2(\gamma_{\alpha} + 1)$ communication rounds, also called \textit{testing} rounds. They are scheduled in $\gamma_{\alpha}+1$ iterations within the loop ``for'' in line~\ref{line:testing}, each iteration containing two 
communication
rounds (underlined in the pseudocode): sending/receiving inquiries in line~\ref{line:inquires} and sending/receiving responses in line~\ref{line:responses}. In the testing rounds of the first epoch, a process $p$ 
sends inquiries to processes in set $\mathcal{N}_{p}(d)$.
The inquired processes respond by sending in the next round (line \ref{line:responses}) their current classic state and specially prepared, in line~\ref{line:copy}, quantum register.
However, if in a result of crashes $p$ starts receiving less than $\delta_{\alpha}$ responses per round, it switches its communication neighborhood from $\mathcal{N}_{p}(d)$ to the next, larger set,
$\mathcal{N}_{p}(d\cdot \alpha)$. A similar adaptation to a crash pattern is continued in the remaining~epochs. 

Process $p$ stores the cardinally of the set being inquired in an epoch in the variable $\texttt{degree}_{p}$ (initialized to $d$ in line~\ref{line:init}). For the purpose of testing rounds, $p$ copies the value $\texttt{degree}_{p}$ to a variable $\texttt{adaptive\_degree}_{p}$ (line~\ref{line:copy}). In every testing round, $p$ adapts its variable $\texttt{adaptive\_degree}_{p}$ to the largest value $x \le \texttt{adaptive\_degree}_{p}$ such that it received at least $\delta_{a}$ responses from processes that have their variable $\texttt{adaptive\_degree}$ at least $x$ (loop ``while'' in line~\ref{line:while-coin}). If $p$ had to decrease the value $\texttt{adaptive\_degree}_{p}$ in testing rounds of an epoch, it then \textit{increases} the main variable $\texttt{degree}_{p}$ by the factor $\alpha$ before the next epoch, see line~\ref{line:if}. The latter operation formally encodes
the intuition that the process $p$ expected to have $\delta_{\alpha}$ non-faulty neighbors with their values of $\texttt{degree}$ at least as big as its own, but due to crashes it did not happen; Therefore, $p$ increases the 
number
of inquired processes, by adopting the next, larger neighborhood set $\mathcal{N}_{p}(\cdot )$, randomly selected, in order to increase the chance of 
communication with the majority of non-faulty processes in the next epoch. On the other hand, the adaptive 
procedure
of reducing $\texttt{adaptive\_degree}$ in testing rounds of a single epoch helps neighbors of $p$ to estimate correctly 
the size of the
neighborhood that process $p$ is using in the current testing round, which might be much smaller than the value $\texttt{degree}_{p}$ from the beginning of the epoch. This, in turn, calibrates the value of $\texttt{adaptive\_degree}$ of the neighbors of $p$, and this calibration can propagate to other processes of distance up to $\gamma_{\alpha}$ from $p$ in the next iterations of testing rounds.
%

\Paragraph{Analysis.}
Let us define graphs $\mathcal{G}(d\alpha^{i})$, for $0 \le i \le k$, as the union of random sets $\cup_{p \in \mathcal{P}} \mathcal{N}_{p}(d\alpha^{i})$. The probability distribution of the graph $\mathcal{G}(d\alpha^{i})$ is the same as the random graph $G(n, y)$ for $y = \frac{d\alpha^{i}}{n}$. Chlebus, Kowalski and Strojnowski~\cite{CK-PODC09} 
showed in their Theorem~2, applied for $k = \frac{64n}{d\alpha^{i - 1}}$,
that the graph $\mathcal{G}(d\alpha^{i})$ has the following properties, whp:
\vspace{0.5em}
\\
\emph{(i)} it is $(\frac{n}{d\alpha^{i - 1}})$-expanding, which follows from $(\frac{n}{d\alpha^{i - 1}}, \frac{2}{3}\frac{n}{d\alpha^{i - 2}}, \frac{4}{3}\frac{n}{d\alpha^{i - 2}})$-edge-expanding property, 
\\
\emph{(ii)} it is $(\frac{n}{d\alpha^{i - 1}},\frac{1}{3}\alpha,\frac{2}{3}\alpha)$-edge-dense, 
\hspace*{4.8em}
\emph{(iii)} it is $(16\frac{n}{d\alpha^{i - 1}},3/4,\frac{2}{3}\alpha)$-compact,
\\
\emph{(iv)} the degree of each node is at most $\frac{21}{20}d\alpha^{i}$.

Since the variable $\texttt{degree}_{p}$ takes values in the set $\{d, d\alpha^{1}, \ldots, d\alpha^{k}\}$, 
the pigeonhole principle assures that eventually $p$ will participate in an epoch in which $\texttt{degree}_{p}$ has not been increased (in the most severe scenario, $p$ will use the set $\mathcal{N}_{p}(d\alpha^{k})$, which consists of all other processes -- 
because it
contains each 
process, as a neighbor of $p$,
with probability $1$). The \underline{$(\gamma_{\alpha}, \delta_{\alpha})$-dense-neighborhood property} of random graphs composed from neighborhoods $\mathcal{N}(\texttt{degree}_{p})$ implies that 
$p$ will then contact a majority of other non-faulty processes
via at most $\gamma_{\alpha}+1$ intermediate processes (during testing rounds). Formally, the following holds:

\begin{lemma}\label{lem:surviving-probing}
If a process $p$ does not change its variable $\texttt{degree}_{p}$ at the end of an epoch $i$, then at the beginning of epoch $i$ there exists a $(\gamma_{\alpha}, \delta_{\alpha})$-dense-neighborhood of $p$ in the graph $\mathcal{G}(\texttt{degree}_{p})$ consisting of non-faulty processes 
with
variable $\texttt{degree}$ 
being
at least $\texttt{degree}_{p}$ in the epoch $i$, whp.
\end{lemma}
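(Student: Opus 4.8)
I would first reduce everything to a deterministic statement. Condition on the event $\mathcal{E}$ that all $k+1$ random graphs $\mathcal{G}(d),\mathcal{G}(d\alpha),\ldots,\mathcal{G}(d\alpha^{k})$ simultaneously enjoy properties (i)--(iv) above; by Theorem~2 of~\cite{CK-PODC09} each of them fails with probability $n^{-\Omega(1)}$, so a union bound over the $k+1=O(\log n)$ graphs gives $\Pr[\mathcal{E}]\ge 1-n^{-\Omega(1)}$. On $\mathcal{E}$, the assertion of the lemma becomes a purely combinatorial consequence of the hypothesis on the execution, so it is enough to establish the implication assuming $\mathcal{E}$.

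Next I would unroll the hypothesis. Write $g:=\texttt{degree}_{p}$ during epoch $i$ (so $g=d\alpha^{m}$ for some $0\le m\le k$) and $G:=\mathcal{G}(g)$. Since $\texttt{adaptive\_degree}_{p}$ is reset to $g$ at the epoch start (line~\ref{line:copy}) and can only move downward, through the ``while''-loop in line~\ref{line:while-coin}, the hypothesis ``$p$ does not increase $\texttt{degree}_{p}$ at the end of epoch $i$'' is equivalent to: $\texttt{adaptive\_degree}_{p}=g$ during the entire epoch, i.e.\ in each of the $\gamma_{\alpha}+1$ testing rounds $j$ the ``while''-loop does not fire, hence $p$ receives at least $\delta_{\alpha}$ responses from processes reporting $\texttt{adaptive\_degree}\ge g$ in round $j$. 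Call a process $q$ \emph{$g$-good in round $j$} if it is alive in round $j$ and reports $\texttt{adaptive\_degree}_{q}\ge g$; such a $q$ automatically has $\texttt{degree}_{q}\ge g$. Reading the same ``while''-loop for an arbitrary process gives the key \emph{backward-in-time certificate}: if $q$ is $g$-good in round $j\ge 2$, then in round $j-1$ process $q$ received at least $\delta_{\alpha}$ responses from processes that were themselves $g$-good in round $j-1$.

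Then I would build the neighborhood by a BFS run \emph{backwards through the testing rounds}: set $L_{0}=\{p\}$ and, for $s=1,\ldots,\gamma_{\alpha}$, let $L_{s}$ be the set of processes that are $g$-good in round $\gamma_{\alpha}+1-s$ and responded, in that round, to some member of $L_{s-1}$. Applying the backward certificate to $p$ and then inductively to each layer, every vertex of $L_{s-1}$ has at least $\delta_{\alpha}$ response-neighbors in $L_{s}$, for all $s\le\gamma_{\alpha}$. Put $N=\bigcup_{s=0}^{\gamma_{\alpha}}L_{s}$: every process in $N$ is non-faulty in epoch $i$, has $\texttt{degree}\ge\texttt{degree}_{p}$, and is within $\gamma_{\alpha}$ hops of $p$. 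To turn $N$ into a genuine $(\gamma_{\alpha},\delta_{\alpha})$-dense-neighborhood one must (a)~give the induced subgraph minimum degree $\ge\delta_{\alpha}$ including on the last layer, and (b)~place the witnessing edges inside $G=\mathcal{G}(\texttt{degree}_{p})$. For (a): property (ii) forces each layer to multiply the reached set by a factor $\Theta(\alpha)$ (the lower bound $\tfrac13\alpha$ on internal edges limits how much the $\ge\delta_{\alpha}$ response-neighborhoods can overlap), so within $\gamma_{\alpha}=\log n/\log\alpha$ layers $|N|$ exceeds the compactness threshold $16n/(d\alpha^{m-1})$; applying the $(16n/(d\alpha^{m-1}),3/4,\tfrac23\alpha)$-compactness of $G$ (property (iii)) extracts a survival set $C\subseteq N$ of size $\ge\tfrac34\cdot16n/(d\alpha^{m-1})$ with minimum degree $\ge\tfrac23\alpha=\delta_{\alpha}$ inside $G|_{C}$, and the $\ell$-expansion (property (i)) keeps $C$ within $O(\gamma_{\alpha})$ hops of $p$, so $C$ with a shortest $p$--$C$ path is the required dense-neighborhood of size a constant fraction of the non-faulty processes. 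For (b): a response edge incident to a process of degree $g'>g$ comes a priori from $\mathcal{G}(g')$; one handles this by noting such a process also carries its independently drawn $\mathcal{G}(g)$-neighborhood and that properties (ii),(iv) force enough of those $\mathcal{G}(g)$-edges to fall inside $N$, or simply by reading ``$\mathcal{G}(\texttt{degree}_{p})$'' as the epoch-$i$ communication graph at density level at least $\texttt{degree}_{p}$.

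I expect the main obstacle to be the temporal coupling with the adaptive adversary rather than the expansion combinatorics (which is already packaged in (i)--(iv)): making the backward certificate airtight requires that ``$q$ reports $\ge g$ in round $j$'' genuinely constrains $q$'s round-$(j-1)$ behaviour \emph{for every} adaptive, full-information crash schedule, including non-clean crashes that deliver a multicast to only some recipients; this is exactly why each epoch contains $\gamma_{\alpha}+1$ testing rounds and why $\texttt{adaptive\_degree}$ is retransmitted and re-calibrated every round, and it is the step that has to be carried out carefully. The secondary difficulty is the degree-heterogeneity issue in step~(b) together with quantifying the overlap-versus-growth trade-off in step~(a) so that the BFS provably reaches $\Theta(n)$ non-faulty processes within $\gamma_{\alpha}$ hops.
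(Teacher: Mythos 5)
Your core argument---reading the \texttt{while}-loop in line~\ref{line:while} as a certificate that any process reporting $\texttt{adaptive\_degree}\ge g$ in a testing round must itself have received at least $\delta_{\alpha}$ such reports in the preceding testing round, and then running a backward induction over the $\gamma_{\alpha}+1$ testing rounds to build the layered neighborhood $L_{0},\ldots,L_{\gamma_{\alpha}}$---is exactly the paper's proof of this lemma. The paper stops at that point: its notion of a $(\gamma_{\alpha},\delta_{\alpha})$-dense-neighborhood (borrowed from~\cite{CK-PODC09}) is precisely the layered object you call $N=\bigcup_{s}L_{s}$, so your additional steps (a)--(b) (overlap-versus-growth accounting, compactness, extraction of a survival set) are not required here --- that machinery is what the paper deploys separately in Lemma~\ref{lemma:small-sets} and when it later invokes Lemma~1 of~\cite{CK-PODC09} for size and diameter bounds. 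Your point in (b), that a responder with larger \texttt{degree} is inquired along an independently drawn denser graph so the witnessing edges need not all lie in $\mathcal{G}(\texttt{degree}_{p})$, is a genuine subtlety that the paper's own proof silently glosses over.
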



On the other hand, \underline{$(16n / d\alpha^{i - 1}, 3/4, 2\alpha / 3)$-compactness} of the (random) graph composed of processes that have the variable $\texttt{degree}$ at least $d\alpha^{i}$, guarantees that the total number of processes that use sets $\mathcal{N}(d\alpha^{i})$ during the epoch $i$ is at most $\frac{n}{\alpha^{i - 2}}$, which amortizes communication complexity.

\begin{lemma}\label{lemma:small-sets}
For any integer $i$, such that $0 \le i \le k$, at the beginning of each epoch there is at most $\frac{16n}{d\alpha^{i - 2}}$ processes with the variable $\texttt{degree}$ greater than $d\alpha^{i}$, whp.
\end{lemma}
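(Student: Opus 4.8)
The plan is to prove the slightly stronger statement that, over the whole execution, at most $\frac{16n}{d\alpha^{i-2}}$ processes ever set $\texttt{degree}_p$ to a value larger than $d\alpha^i$; since $\texttt{degree}_p$ is non-decreasing (line~\ref{line:if} only enlarges it), this gives the per-epoch bound claimed. I argue by induction on $i$. When $\frac{16n}{d\alpha^{i-2}}\ge n$ — in particular for $i\le 2$, as $\alpha\ge\log n$ — there is nothing to prove, which also fixes the base of the induction.

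For the inductive step, the hypothesis for $i-1$ says that at most $\frac{16n}{d\alpha^{i-3}}$ processes ever have $\texttt{degree}>d\alpha^{i-1}$, i.e.\ $|R_{\ge i}|\le\frac{16n}{d\alpha^{i-3}}$, where $R_{\ge i}$ is the set of processes that ever reach $\texttt{degree}\ge d\alpha^{i}$. Let $R$ be the set of processes that ever reach $\texttt{degree}\ge d\alpha^{i+1}$ (equivalently, ever set $\texttt{degree}>d\alpha^i$); note $R\subseteq R_{\ge i}$, and every $p\in R$ passes through level exactly $i$ and, at a unique epoch $e_p$, raises $\texttt{degree}_p$ from $d\alpha^i$ to $d\alpha^{i+1}$. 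Suppose toward a contradiction that $|R|>\frac{16n}{d\alpha^{i-2}}$; then a fortiori $|R|>\frac{16n}{d\alpha^{i-1}}=:\ell$, the compactness threshold of $\mathcal{G}(d\alpha^i)$ from property~(iii). Applying $(\ell,3/4,\tfrac23\alpha)$-compactness of $\mathcal{G}(d\alpha^i)$ to $R$ produces a survival set $C\subseteq R$ with $|C|\ge\tfrac34\ell$ such that every process of $C$ has at least $\delta_\alpha=\tfrac23\alpha$ neighbours inside $C$ in $\mathcal{G}(d\alpha^i)$.

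Now I track the dynamics. For $p\in R$ to raise $\texttt{degree}_p$ at epoch $e_p$, the while-loop of line~\ref{line:while-coin} must decrease $\texttt{adaptive\_degree}_p$ below $d\alpha^i$ during $e_p$; hence in the first testing round of $e_p$ where this happens, $p$ received responses from fewer than $\delta_\alpha$ processes that are alive and at level $\ge i$ at that moment (those reporting $\texttt{adaptive\_degree}\ge d\alpha^i$). Consider the process $p^\star\in C$ with maximal escalation epoch $e_{p^\star}$: at the start of $e_{p^\star}$ every member of $C$ is already at level $\ge i$ (processes with $e_p<e_{p^\star}$ sit at level $\ge i+1$, while $p^\star$ and any $p$ with $e_p=e_{p^\star}$ are at level exactly $i$), so in the first testing round of $e_{p^\star}$ every still-alive neighbour of $p^\star$ inside $C$ responds with $\texttt{adaptive\_degree}=\texttt{degree}\ge d\alpha^i$. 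Since $p^\star$ has $\ge\delta_\alpha$ neighbours inside $C$, its failure forces the adversary to have crashed all but fewer than $\delta_\alpha$ of them by that round. Iterating this over all of $C$ — using the $\gamma_\alpha$-hop calibration mechanism behind Lemma~\ref{lem:surviving-probing} to propagate the ``high degree'' information along short paths that remain inside $C$, the expanding and edge-density properties~(i)--(ii) of $\mathcal{G}(d\alpha^i)$ to see that $\mathcal{G}(d\alpha^i)|_C$ has $\ge\tfrac13\alpha|C|$ internal edges, and the induction hypothesis to cap how many processes ever reach level $\ge i$ — one derives a lower bound on the number of crashed processes that is incompatible with these caps, a contradiction. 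Hence $|R|\le\frac{16n}{d\alpha^{i-2}}$, completing the induction; since the graph properties (i)--(iv) hold only with high probability, so does the conclusion.

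I expect this last step to be the main obstacle. The local facts ``had I not been cut off, I would not have escalated'' hold for different processes of $C$ at \emph{different} epochs, and stitching them into one global contradiction requires carefully tracking, for each escalating $p\in C$ and each of its $C$-neighbours $q$, whether $q$ has \emph{already} reached level $\ge i$ (so $q$'s response counts for $p$) or is still climbing — exactly the role of the $(\gamma_\alpha,\delta_\alpha)$-dense-neighbourhood propagation of Lemma~\ref{lem:surviving-probing}, combined with the compactness and edge-density guarantees of the random graphs $\mathcal{G}(d\alpha^i)$.
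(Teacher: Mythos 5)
Your overall skeleton --- proof by contradiction, compactness of the random graph to extract a survival set $C$, and then examining the extremal (last-to-escalate) process of $C$ --- matches the paper's. But you quantify over the whole execution ($R$ is the set of processes that \emph{ever} exceed level $i$), and that is precisely what opens the hole you flag yourself at the end. Because members of $R$ escalate at different epochs and may crash at any time afterwards, you cannot guarantee that the $\delta_\alpha$ neighbours of your extremal process $p^\star$ inside $C$ are still alive at epoch $e_{p^\star}$; you are then forced into bounding how many processes the adversary must have crashed, and that cannot produce a contradiction, since the lemma must hold (and is used for the communication bound) against an adversary allowed to crash up to $n-1$ processes --- there is no cap on crashes to play against. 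The ``iterate this over all of $C$'' step is therefore not a technical detail to be filled in; as set up, it fails.

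The missing idea is to localize the contradiction to a single epoch instead of the whole execution. Take the \emph{first} epoch at whose beginning more than $\frac{16n}{d\alpha^{i-2}}$ processes have $\texttt{degree} > d\alpha^{i}$, and let $A$ be that set. All of $A$ is alive simultaneously at that moment, hence alive throughout every earlier epoch; consequently the survival set $C \subseteq A$ (taken with respect to $\mathcal{G}(d\alpha^{i-1})$, which is $(\frac{16n}{d\alpha^{i-2}},3/4,\frac{2}{3}\alpha)$-compact) consists of processes that are \emph{all alive} at the epoch $j$ in which the last member $q$ of $C$ raises its degree to $d\alpha^{i}$. At that epoch every member of $C$ has $\texttt{degree}\ge d\alpha^{i-1}$, so each of $q$'s at least $\delta_\alpha$ neighbours inside $C$ responds in every testing round with $\texttt{adaptive\_degree}\ge d\alpha^{i-1}=\texttt{adaptive\_degree}_{q}$ (a short induction over testing rounds, as in Lemma~\ref{lem:surviving-probing}, shows that no member of $C$ ever drops its adaptive degree below $d\alpha^{i-1}$ during that epoch). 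Hence the while loop in line~\ref{line:while-coin} never fires for $q$, line~\ref{line:if} is not executed, and $q$ does not escalate --- an immediate contradiction with no crash counting. Two smaller points: your base case is not actually vacuous at $i=2$ (there $\frac{16n}{d\alpha^{i-2}}=\frac{16n}{d}<n$ since $d\ge\log n$), and once the argument is localized to one epoch the induction on $i$ is not needed at all.
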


The above discussion yields the following result.

\begin{theorem}\label{thm:coin}
For two integer parameters $d, \alpha \in \Omega(\log{n})$, 
the algorithm \textsc{QuantumCoinFlip} generates a weak global coin, 
provided
that at most $\frac{1}{3}$-fraction of initially non-faulty processes have crashed. It terminates in $O\Big(\big(\frac{\log{n}}{\log{\alpha}}\big)^{3}\Big)$ rounds and with high probability uses $O\Big(\big(\frac{\log{n}}{\log{\alpha}}\big)^{4} d\alpha^{2}\log{n}\Big)$ both quantum and classic communication bits (amortized) per process.
\end{theorem}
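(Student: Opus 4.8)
The statement bundles three claims---round complexity, communication, and the weak-global-coin property (Definition~\ref{def:coin})---which I would dispatch in that order of increasing difficulty. The round bound is read straight off the pseudocode: the outer loop of line~\ref{line:epochs} runs $(k+2)^2$ epochs, each epoch performs $\gamma_\alpha+1$ iterations of testing rounds (line~\ref{line:testing}), each iteration being exactly two communication rounds---the inquiry round (line~\ref{line:inquires}) and the response round (line~\ref{line:responses})---with the \texttt{while}-loop of line~\ref{line:while-coin} purely local. Hence the execution lasts $2(k+2)^2(\gamma_\alpha+1)$ rounds, and since $d\ge\log n$ forces $k=\lceil\log(n/d)/\log\alpha\rceil\le\gamma_\alpha=\log n/\log\alpha$, this is $O\big((\log n/\log\alpha)^3\big)$.

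For communication, note that every message is $O(\log n)$ long---an inquiry bit, or a response carrying a quantum register of $3\log n+1$ qubits plus a classical copy of \texttt{adaptive\_degree}---so it suffices to count messages. In one testing round a process using $\texttt{degree}=d\alpha^i$ is incident to at most $\tfrac{21}{20}d\alpha^i$ links of $\mathcal{G}(d\alpha^i)$ (property~(iv)), while by Lemma~\ref{lemma:small-sets} only $O(n\alpha^2/(d\alpha^i))$ processes ever use such a degree, whp; multiplying and summing the geometric series over the degree layers $i=0,\dots,k$ bounds the number of messages in one testing round by $O\big(k\,n\,d\alpha^2\big)$. Multiplying by the $O\big((\log n/\log\alpha)^3\big)$ testing rounds, by the $O(\log n)$ message length, and dividing by $n$, yields the stated $O\big((\log n/\log\alpha)^4\,d\alpha^2\log n\big)$ quantum and classical bits amortized per process, whp.

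The substantive part is fairness, for which I would reuse the linearity reduction sketched before the statement: the whole run is a single unitary $U$ on the joint register of all qubits ever created, and although $U$ depends on which quantum transfers the adversary disrupts, the adversary's crash schedule is a fixed function of the (superposed) state and the classical history---not of which computational-basis vector one sits in---so $U$ permutes the computational basis exactly as the classical ``largest drawn name wins'' protocol does. By Lemma~\ref{lem:chance-win} it then suffices to show that, whp and against every adaptive $n/3$-adversary, the set $A$ of processes that stay correct throughout (so $|A|\ge 2n/3$, as at most $n/3$ crash) is pairwise connected by paths of correct processes along the links actually used, so that whenever the globally largest drawn name is held by a member of $A$ its main-register content reaches all of $A$ by the measurement of line~\ref{line:measurement}; then all of $A$ agrees with probability $\ge|A|/n-o(1)\ge 2/3-o(1)$, and---since a random leader's coin bit is unbiased and the protocol is symmetric under flipping every coin bit---for each $b\in\{0,1\}$ every good process outputs $b$ with probability at least $\tfrac12(|A|/n-o(1))\ge 1/4$, i.e.\ fairness $\rho\ge 1/4$. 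The connectivity I would obtain by tracking the degree escalation: each correct $p$ raises $\texttt{degree}_p$ at most $k$ times (it ranges over $\{d,d\alpha,\dots,d\alpha^k\}$ with $d\alpha^k\ge n$), so it spends all but at most $k$ of the $(k+2)^2$ epochs ``stable'', and in any stable epoch Lemma~\ref{lem:surviving-probing} hands it a $(\gamma_\alpha,\delta_\alpha)$-dense-neighborhood of correct processes (min-degree $\ge\delta_\alpha$, radius $\le\gamma_\alpha$) inside $\mathcal{G}(\texttt{degree}_p)$ through which it can relay within one epoch's $\gamma_\alpha+1$ testing rounds; using the $\ell$-expansion (property~(i)) to glue overlapping dense-neighborhoods, Lemma~\ref{lemma:small-sets} and compactness (property~(iii)) to bound how the high-degree process sets thin out with the degree, and the $n/3$ crash budget, one shows that across the $(k+2)^2$ epochs these dense-neighborhoods assemble into one component omitting at most $n/3$ processes, while any remaining correct process---which in the worst case escalates all the way to $\texttt{degree}=d\alpha^k$, whose neighborhood is all of $\cP$---pulls that component's maximal main-register content via the \texttt{F\_CNOT}/\texttt{Controlled\_Swap} gates; a scheduling count fits this $O(k)$-deep propagation inside the $(k+2)^2=O(k^2)$ epochs even allowing for re-escalations forced by an adaptive crash pattern.

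The step I expect to be the real obstacle is this last combinatorial argument against the \emph{adaptive, full-information} adversary: one must show that Lemmas~\ref{lem:surviving-probing} and~\ref{lemma:small-sets}, together with the expansion, edge-density and compactness properties (i)--(iii) of the random graphs, hold simultaneously for every epoch and every adversarial crash schedule---so that the adversary's knowledge of the (fixed) random graphs and of the current state never lets it split the correct processes faster than degree escalation reconnects them---and that the resulting dense-neighborhoods genuinely merge into one component of size $>n/2$ rather than into many small fragments. Making this quantitative is where the interplay of properties (i)--(iv), of the $\delta_\alpha$-threshold guarding the \texttt{while}-loop, and of the pigeonhole over the $(k+2)^2$ epochs all has to be pinned down; the quantum ingredients, by contrast, only supply the clean linearity reduction to this purely combinatorial claim.
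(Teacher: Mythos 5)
Your decomposition matches the paper's: the round count read off the loop structure, the communication bound obtained by combining Lemma~\ref{lemma:small-sets} with the per-node degree bound and summing the geometric series over degree layers, and fairness reduced via the linearity argument of Lemma~\ref{lem:chance-win} to showing that the surviving $\ge 2/3$ fraction of processes is pairwise connected by relay paths of correct processes. The one place you genuinely diverge is the step you yourself flag as ``the real obstacle'': you propose to assemble all the stable-epoch dense-neighborhoods into a single large connected component across the $(k+2)^2$ epochs, which is both harder than necessary and not actually carried out in your sketch. The paper's Lemma~\ref{lemma:non-faulty-ent} avoids any global component argument by working pairwise: for any two correct $p,q$, since $\texttt{degree}_p$ takes at most $k+1$ values, the pigeonhole over $(k+2)^2$ epochs yields $k+2$ consecutive epochs in which $\texttt{degree}_p$ is stable, and a second pigeonhole inside that window yields two consecutive epochs in which $\texttt{degree}_q$ is also stable; Lemma~\ref{lem:surviving-probing} then supplies $(\gamma_\alpha,\delta_\alpha)$-dense-neighborhoods $A$ of $p$ and $B$ of $q$, each of size at least $n\alpha/\texttt{degree}$, the $\ell$-expansion property (i) of the denser of the two communication graphs forces an inquiry edge between $A$ and $B$ in the first testing round of the second stable epoch, and the diameter bound $\gamma_\alpha$ on these neighborhoods lets the register relay from $p$ to $q$ within that epoch's remaining $\gamma_\alpha$ testing rounds. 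So the quantitative interplay you worry about (compactness, the $\delta_\alpha$ threshold, adaptive crash schedules splitting the graph) is entirely localized to Lemmas~\ref{lem:surviving-probing} and~\ref{lemma:small-sets} plus one expansion step per pair; I would recommend replacing your component-merging plan with this pairwise argument, after which the rest of your proposal goes through as written (your fairness constant $1/4$ versus the paper's $1/3-o(1)$ is an immaterial difference).
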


\section{Constant-Time Communication-Efficient Fuzzy Counting}
\label{sec:counting}

Although generating a weak global coin in a constant number of rounds against an adaptive adversary requires quantum communication (due to the lower bound by Ben-Or and Bar-Joseph~\cite{Bar-JosephB98}), the \textsc{CheapQuantumCoin} algorithm, even without quantum communication, achieves few other goals. As discussed in the previous section, its random communication pattern guarantees, whp, that any additional classic message, also called a rumor, of a non-faulty process can be conveyed to any other non-faulty process if 
added to every classic message 
sent/received in
line~\ref{line:responses}. Even more, assume that there is a set of $x$ messages/rumors $M = \{ m_{1}, \ldots, m_{x} \}$, 
distributed as inputs among some subset of processes (one message from $M$ per process). If processes always convey all the known  
rumors
from set $M$ when using classic communication (avoiding repetition), then they solve a Gossip problem, whp, i.e., every 
rumor
$m_{i}$ given to a non-faulty process, for $1 \le i \le x$, is known at the end of the protocol to every other non-faulty process. Observe that processes resign from the quantum content of communication for this purpose, and instead of $\log{n}$ bits (or qubits) per message, they use $|M|$ bits, where $|M|$ denotes the number of bits needed to encode all 
rumors
from $M$. Finally, if processes work in a model where the names of other processes are commonly known, they can withdraw from using random communication. Instead, they can use a deterministic family of graphs $\mathcal{G}(d\alpha^{i})$, for the same choice of parameters $d$ and $\alpha$. The proof of existence of such graphs, using the probabilistic argument, was described 
in \cite{CK-PODC09} (Theorem~2). In 
such
case, the set $\mathcal{N}_{p}(d\alpha^{i})$ is the neighborhood of process $p$ in the deterministic graph $\mathcal{G}(d\alpha^{i})$. 
(Processes compute the same copies of graphs $\mathcal{G}$ locally in the beginning of the algorithm.)
The above augmentation of the algorithm,
together with the proof of
Theorem~\ref{thm:coin}, from which we take the upper bound on the number of messages send and the upper bound on the number of rounds, leads to the following result.

\begin{theorem}
\label{thm:fast-gossip}
For 
integer parameters $d, \alpha \in \Omega(\log{n})$, there is a deterministic algorithm that solves the gossip problem in $O\left(\left(\frac{\log{n}}{\log{\alpha}}\right)^{3}\right)$ rounds using $O\left(\left(\frac{\log{n}}{\log{\alpha}}\right)^{4} d\alpha^{2}\cdot (|M| + \log{n})\right)$ communication bits per process (amortized), where $|M|$ is the number of bits needed to encode all~rumors~in~$M$.
\end{theorem}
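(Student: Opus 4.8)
The plan is to obtain Theorem~\ref{thm:fast-gossip} as a ``classical shadow'' of the \textsc{CheapQuantumCoin} protocol of Theorem~\ref{thm:coin}, keeping its communication schedule and adaptive-degree mechanism intact, but replacing the quantum payload of each message with the set of currently-known rumors from $M$ and derandomizing the graph family. First I would state precisely the syntactic modification: run Algorithm~\ref{fig:quantum-coin} with the given $d,\alpha$, but (a) in line~\ref{line:responses} a process appends to its classical message the list of all rumors in $M$ it has learned so far (suppressing duplicates), and upon receipt a process adds any new rumors to its local store; (b) the quantum registers and the gates in lines~\ref{line:quantum-updates-begin}--\ref{line:quantum-updates-end} are simply deleted; (c) since in the name-aware model all processes can locally compute identical copies of the deterministic graphs $\mathcal{G}(d\alpha^i)$ guaranteed by \cite{CK-PODC09} (Theorem~2), the sets $\mathcal{N}_p(d\alpha^i)$ are taken to be the neighborhoods of $p$ in those fixed graphs rather than freshly sampled random sets. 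The adaptive rule governing $\texttt{degree}_p$ and $\texttt{adaptive\_degree}_p$ (lines~\ref{line:copy}, \ref{line:while-coin}, \ref{line:if}) is unchanged.

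The correctness argument then reuses, verbatim, the combinatorial core of the analysis in Section~\ref{sec:quantum-coin}. The key point is that the properties (i)--(iv) invoked there are properties of the graphs $\mathcal{G}(d\alpha^i)$ alone, not of the quantum state, so they hold for the deterministic family by the probabilistic-method existence proof of \cite{CK-PODC09}; in the randomized name-aware case they hold whp exactly as before. I would then argue that Lemma~\ref{lem:surviving-probing} and Lemma~\ref{lemma:small-sets} carry over unchanged: the former shows that once a process $p$ stops increasing $\texttt{degree}_p$ — which must happen within $(k+2)^2$ epochs by the pigeonhole principle, since $\texttt{degree}_p$ takes values in the $(k+1)$-element set $\{d,d\alpha,\dots,d\alpha^k\}$ — there is a $(\gamma_\alpha,\delta_\alpha)$-dense neighborhood of non-faulty processes around $p$ in $\mathcal{G}(\texttt{degree}_p)$ reaching a majority of surviving processes via paths of length $\le \gamma_\alpha+1$; the latter bounds the number of processes using the denser sets, which is what controls the amortized cost. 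Composing these over all surviving processes shows that the ``relay-connectivity'' set $A$ of Lemma~\ref{lem:chance-win} — the set any two of whose members are joined by a surviving relay path — contains all non-faulty processes (this is where the hypothesis that at most $\tfrac13$ of the initially non-faulty processes crash enters, exactly as in the coin analysis). Since every classical message from line~\ref{line:responses} now carries all rumors known to the sender, an easy induction on path length shows that each rumor $m_i$ held by a non-faulty process reaches every non-faulty process along such a relay path before the protocol ends, which is precisely the Gossip guarantee.

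For the complexity bounds I would import the round count and the per-process message count directly from Theorem~\ref{thm:coin}: the schedule is identical, so the running time is $O((\log n/\log\alpha)^3)$ rounds, and the number of messages sent/received per process (amortized) is $O((\log n/\log\alpha)^4 d\alpha^2)$ by the same calculation that uses Lemma~\ref{lemma:small-sets} to show that only $O(n/\alpha^{i-2})$ processes ever use degree $d\alpha^i$. The only change is the bit-length of each message: instead of the $\Theta(\log n)$ qubits/bits of a $\ket{LeaderCoin}$ block, each message now carries $O(|M|+\log n)$ classical bits (the $+\log n$ absorbs control fields such as $\texttt{adaptive\_degree}$ and process ids), giving the stated $O((\log n/\log\alpha)^4 d\alpha^2\cdot(|M|+\log n))$ communication bits per process amortized. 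I expect the main obstacle to be bookkeeping the duplicate-suppression of rumors so as not to inflate the per-message length beyond $O(|M|+\log n)$ over the $O((\log n/\log\alpha)^3)$ rounds — i.e., arguing that a process never re-sends a rumor it has already forwarded to a given neighbor, so that the total rumor traffic per link is $O(|M|)$ rather than $O(|M|)$ per round — together with checking that the deterministic graphs of \cite{CK-PODC09} satisfy all of (i)--(iv) simultaneously (the cited Theorem~2 gives these, but one must verify the parameter substitution $k=64n/(d\alpha^{i-1})$ used in the coin analysis applies verbatim here).
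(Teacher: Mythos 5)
Your proposal is essentially the paper's own argument: the paper derives Theorem~\ref{thm:fast-gossip} exactly by stripping the quantum payload from \textsc{CheapQuantumCoin}, piggybacking all known rumors (without repetition) on the classical responses of line~\ref{line:responses}, replacing the random neighborhoods by the deterministic graphs of \cite{CK-PODC09}, and importing the round and message counts from the proof of Theorem~\ref{thm:coin}, with the per-message length $\log n$ replaced by $|M|+\log n$. One small correction: the all-pairs relay connectivity needed for Gossip does \emph{not} require the ``at most $\tfrac13$ of initially non-faulty processes crash'' hypothesis you invoke --- it follows for any crash pattern from the pigeonhole argument over the $(k+2)^2$ epochs (Lemma~\ref{lemma:non-faulty-ent}), which is why the theorem states no such restriction; also, your worry about amortizing rumor traffic to $O(|M|)$ per link is unnecessary, since the claimed bound already charges $O(|M|+\log n)$ bits to every message.
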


\Paragraph{Generalized Fuzzy Counting.}

\begin{definition}[Fuzzy Counting~\cite{HKO-STOC-2022}]
An algorithm solves Fuzzy Counting if
each process returns a number between the initial and the final number of active processes. Here, the notion of ``being active'' depends on the goal of the counting, e.g., all non-faulty processes, processes with initial value $1$, etc.
\end{definition}

\noindent Note that the returned numbers could be different across processes.
Here, we refine the state-of-art method of solving
the fuzzy counting problem, even deterministically, and propose a new recursive algorithm 
with
any branching factor $x$, called $\textsc{FastCounting}$. 
Formally, we prove the following:


\begin{theorem}
\label{thm:fuzzy-counting}
For any $2 \le x \le n$ and $\delta, \alpha \in \Omega(n)$, the \textsc{FastCounting} deterministic algorithm solves the \Fuzzy Counting problem in $O\Big(\frac{\log{n}}{\log{x}}\big(\frac{\log{n}}{\log{\alpha}}\big)^{3}\Big)$ rounds, using $O\Big(\frac{\log{n}}{\log{x}}\big(\frac{\log{n}}{\log{\alpha}}\big)^{4} d\alpha^{2} \cdot x\log{n}\Big)$ communication bits (amortized) per process.
\end{theorem}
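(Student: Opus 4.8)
The plan is to prove Theorem~\ref{thm:fuzzy-counting} by a recursive divide-and-conquer scheme that uses the fast gossip primitive of Theorem~\ref{thm:fast-gossip} as the per-level communication engine. First I would describe the recursion: the $n$ processes are partitioned (by name) into $x$ groups of size roughly $n/x$; within each group the processes recursively solve Fuzzy Counting of the active processes \emph{in that group}, obtaining at each process an estimate lying between the final and initial counts of active processes in its group. At the base of the recursion (groups of constant size, or size below which counting is trivial), each process just learns the exact membership of its tiny group via $O(1)$ rounds of all-to-all exchange. Going back up one level, a representative-free aggregation step is run: all processes in the $x$ sibling groups exchange, via one invocation of \textsc{FastGossip} restricted to the $n/x^{j-1}$-node ``super-group'' at level $j$, the $x$ sub-estimates computed one level down, and each process outputs their sum. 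The rumor set $M$ at a level-$j$ invocation consists of $O(x)$ numbers, each $O(\log n)$ bits, so $|M| = O(x\log n)$, which is exactly what appears in the communication bound.

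The key steps, in order, are: (1) Set up the recursion tree of depth $\lceil \log n / \log x \rceil$ and specify, at each internal node, that the gossip is run on the sub-population of processes whose names fall in that node's block, using the deterministic graph family $\mathcal{G}(d\alpha^i)$ restricted to that block (this requires that the restriction of the graph family still has the expansion/compactness properties, which follows by the same probabilistic/deterministic construction of~\cite{CK-PODC09} applied to $n/x^{j}$ nodes). (2) Prove the correctness invariant by induction on the level: at a level-$j$ node, every non-faulty process outputs a value that is $\ge$ the number of active processes in that block at the end of the sub-call and $\le$ the number active at the start; the inductive step uses (a) the inductive hypothesis on the $x$ children to sandwich each child's sub-count, and (b) the gossip guarantee that every non-faulty process learns the sub-count produced by every non-faulty process, together with the monotone ``no process un-crashes'' property so that summing preserves the sandwich — crashes that happen during the level-$j$ gossip only \emph{decrease} the true count, staying within the claimed window. (3) Tally the complexity: each of the $\log n / \log x$ levels costs one \textsc{FastGossip} call, which by Theorem~\ref{thm:fast-gossip} is $O((\log n/\log\alpha)^3)$ rounds and $O((\log n/\log\alpha)^4 d\alpha^2 (|M|+\log n))$ bits per process with $|M| = O(x\log n)$; multiplying by the number of levels gives the stated bounds.

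The main obstacle I expect is the correctness invariant in the presence of crashes that straddle level boundaries — i.e., showing that even though a process may crash \emph{after} contributing its sub-count to the level-$j$ gossip but \emph{before} the level-$(j+1)$ gossip, the summed estimate still lies in the $[\text{final},\text{initial}]$ window for the larger block. This needs a careful choice of which ``snapshot'' of the active set each level's output is compared against, and an argument that the windows at consecutive levels are nested consistently (the final count at a child is $\ge$ the final count used for the parent's lower bound, and the initial count at a child is $\le$ the initial count for the parent's upper bound). A secondary technical point is ensuring the deterministic graph family behaves well after restriction to a sub-block of processes whose size is $n/x^j$ rather than $n$; I would handle this by instantiating a separate family for each block size in the recursion and invoking the existence theorem of~\cite{CK-PODC09} at that scale, noting $d,\alpha \in \Omega(n/x^j)$ still holds since the theorem's hypothesis $d,\alpha\in\Omega(\cdot)$ is stated relative to the relevant population size. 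The rest — the round and bit accounting — is routine once the recursion and the gossip black box are in place.
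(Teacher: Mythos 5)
Your proposal is correct and follows essentially the same route as the paper's own proof: a recursion of depth $O(\log n/\log x)$ with the Gossip primitive of Theorem~\ref{thm:fast-gossip} (rumor set of size $O(x\log n)$) as the merging step, correctness by induction on the recursion with the observation that the summed estimates can differ from the true counts only by the crashed processes, and complexity obtained by multiplying the recursion depth by the per-level gossip cost. Your additional care about crashes straddling level boundaries and about re-instantiating the graph family at each block size elaborates on points the paper treats implicitly, but does not change the argument.
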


Obviously, the constant-time is achieved in Theorem~\ref{thm:fuzzy-counting} when $x, \alpha = n^{\epsilon}$, for a constant $\epsilon \in (0,1)$. In this case, the number of rounds is $O\Big(\big( \frac{1}{\epsilon} \big)^{4}\Big)$, while the communication complexity is $O(n^{3\epsilon}\log^{2}{n})$ (amortized) per process. In our approach, we generalize the method of Hajiaghayi et al. 
\cite{HKO-STOC-2022} to denser communication graphs of certain properties, which allows us to achieve constant running time. The constant running time is the key feature of the algorithm, since it is used in the implementation of (expected) constant-round quantum consensus protocol {\sc CheapQuantumConsensus}. The main difference between ours and the state-of-art approach is a different Gossip protocol 
used in the divide-and-conquer method. 

The $\textsc{FastCounting}$ algorithm is recurrent. It takes as 
an
input the following values: $\mathcal{P}$ is the set of processes on which the algorithm is executed; $p$ is the 
name of a process which executes the algorithm; $a_{p} \in \{0, 1\}$ denotes if $p$ is active ($a_p=1$) or not;
and parameters $x, d, \alpha$, where $x$ is the branching factor and $d, \alpha$ steer the density of the  communication graph in the execution. Knowing the set $\mathcal{P}$ of $n$ processes, $\textsc{FastCounting}$ splits the set into $x$ disjoint groups of processes, each of size between $\floor{\frac{n}{x}}$ and $\ceil{\frac{n}{x}}$. Name these groups $\mathcal{P}_{1}, \ldots, \mathcal{P}_{x}$. The groups are known to each participating process. The algorithm then makes $x$ parallel recursive calls on each of these groups. As a result, a process $p$ from a group $\mathcal{P}_{i}$, for $1 \le i \le x$, gets the number of the active processes in its group $\mathcal{P}_{i}$. In the merging step, all processes execute Gossip algorithm, proposed in Theorem~\ref{thm:fast-gossip}, with parameters $d, \alpha$, where the input rumors are the numbers calculated in the 
recursive
calls. To keep the communication small, when processes learn new rumors they always store at most one rumor corresponding to each of the $x$ groups. This way, the number of bits needed to encode all rumors is $O(x\log{n})$. 
Let $r_{1}, \ldots, r_{\ell}$ be the rumors learned by process $p$ from the execution of the Gossip algorithm. The final output of $p$ is the sum $\sum_{i = 1}^{\ell} r_{i}$. We provide the pseudocode of the algorithm in Figure~\ref{fig:counting}. 
Note that all processes run the algorithm to help counting and Gossip, not only those with $a_p=1$, but only the latter are counted (see lines 2, 5, 7).
For detail analysis of the round and communication bit complexity we refer to Section~\ref{sec:fast-counting-analysis}.
\SetAlFnt{\small}
\begin{algorithm}[ht!]
\SetAlgoLined
\SetKwInput{Input}{input}
\Input{$\cP$, $p$, $a_{p}$; $x, d, \alpha$}
\uIf{\label{line:base-case} $|\cP| = 1$}
{
    \Return{$a_{p}$}
}
$\mathcal{P}_{1}, \ldots, \mathcal{P}_{x} \leftarrow$ partition of $\mathcal{P}$ into $x$ disjoint group of size between $\floor{\frac{|\mathcal{P}|}{x}}$ and $\ceil{\frac{|\mathcal{P}|}{x}}$ 
\tcc*[r]{this partition is common for all processes, since we assumed the common knowledge of set $\mathcal{P}$}
let $g$ be the index of the $p$'s group, i.e., $p \in \mathcal{P}_{g}$\; 
$\texttt{active}_{p} \leftarrow \textsc{FastCounting}(\mathcal{P}_{g}, p, a_{p}, x, d, \alpha)$\; \label{line:recursive-counting}
$\mathcal{R} \leftarrow$ the outcome of the Gossip algorithm, referred to in Theorem~\ref{thm:fast-gossip}, executed on the set of processes $\mathcal{P}$ with initial message $\texttt{active}_{p}$ 
and parameters $d, \alpha$\; \label{line:fast-gossip}

\Return{\label{line:counting-output} $\sum_{q \in \mathcal{R}} \texttt{active}_{q}$}
\caption{\textsc{FastCounting} for process $p$\label{fig:counting}}
\end{algorithm}






\newpage

\bibliographystyle{apa}
\bibliographystyle{ACM-Reference-Format}

\appendix
\begin{center}
    {\Large\bf APPENDIX}
\end{center}

\section{The Analysis of CheapQuantumConsensus Algorithm}
\label{sec:consensus-analysis}

To analyze the \textsc{CheapQuantumConsensus} algorithm we first recall a combinations of lemmas from \cite{Bar-JosephB02}.

\begin{lemma}[Lemmas $4.1, 4.2, 4.3$ in \cite{Bar-JosephB02}]\label{lem:same-b}
If all processes have the same value at the beginning of an iteration of the main while loop, then the algorithm returns the decision after at most two iterations.
\end{lemma}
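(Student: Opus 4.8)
The statement is essentially a bookkeeping lemma about the thresholds in the \textsc{CheapQuantumConsensus} pseudocode, combined with the guarantee of \textsc{FastCounting}. I would argue as follows. Suppose at the beginning of some iteration $r$ of the while loop every non-faulty process $p$ has $b_p = v$ for a common value $v \in \{0,1\}$; without loss of generality take $v=1$. In line~\ref{line:fast-counting} every process runs \textsc{FastCounting} to compute $O_p^r$, the count of processes currently preferring $1$, and $Z_p^r$, the count preferring $0$. By the correctness guarantee of \textsc{FastCounting} (the fuzzy-counting property stated in the algorithm's description and in Theorem~\ref{thm:fuzzy-counting}), $O_p^r$ lies between the number of processes preferring $1$ at the end and the number at the beginning of the counting; since \emph{all} non-faulty processes prefer $1$ at the start, and crashed processes are never counted, $O_p^r$ equals the number $m$ of non-faulty processes at the moment the count finishes (up to the slack, which only goes in the ``helpful'' direction here), while $Z_p^r = 0$ — no process has preference $0$, so nothing can be counted as a zero. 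Hence $N_p^r = O_p^r + Z_p^r = O_p^r$ and $O_p^r = N_p^r$.

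Now I feed this into the threshold tests. Since $O_p^r = N_p^r > (7N_p^r - 1)/10$ (this inequality holds whenever $N_p^r \ge 1$, which is guaranteed because at least one process survives), line~\ref{line:decided-one} fires: every non-faulty process sets $b_p \leftarrow 1$ and $\texttt{decided} \leftarrow TRUE$. Crucially, no process can enter any of the later branches, so the common value $v=1$ is preserved and the \texttt{decided} flag is uniformly set. (The symmetric branch, line~\ref{line:decided-zero}, handles $v=0$: there $O_p^r = 0 < (4N_p^r-1)/10$, so $b_p \leftarrow 0$, $\texttt{decided}\leftarrow TRUE$.) Note also that the short-circuit block guarded by $N_p^r < \sqrt{n/\log n}$ does not interfere: it only adds a deterministic subprotocol and does not alter $b_p$ before the threshold tests, and in any case we may invoke Lemmas $4.1$--$4.3$ of \cite{Bar-JosephB02} for that regime.

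In the next iteration, $r+1$, all non-faulty processes again begin with the same value and with $\texttt{decided} = TRUE$. They recount; by the same argument $O_p^{r+1} = N_p^{r+1}$ (still all-$1$, or all-$0$ in the symmetric case). Because $\texttt{decided}=TRUE$, they first execute the \texttt{diff} test: $\texttt{diff} = N_p^{r-2} - N_p^{r+1}$ is the number of processes that crashed over the relevant window, and the test compares it to $N_p^{r-1}/10$; if few enough crashes occurred the process executes STOP and returns $b_p = v$, giving agreement. The only way a process fails to STOP is if \texttt{diff} is large, i.e. a $1/10$-fraction crashed in that window; but then it merely resets $\texttt{decided}\leftarrow FALSE$, re-runs the threshold tests, which (as above) again set $b_p \leftarrow v$ and $\texttt{decided}\leftarrow TRUE$, and the count of alive processes has strictly dropped — so this cannot recur indefinitely, and within the stated two iterations all survivors STOP. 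The routine part is just chasing the constants $7/10, 6/10, 5/10, 4/10$ through the if/else-if cascade; the one place to be careful — and the only genuine obstacle — is making the \emph{uniformity} argument airtight: I must check that the fuzzy-counting slack can never push some process into a different threshold bucket than another when the input is already unanimous, which is exactly why the slack being one-sided (counts lie between the before- and after- populations, and the ``off'' value has population $0$ at both endpoints) is essential. With that observed, the lemma reduces cleanly to the cited Lemmas $4.1$--$4.3$ of \cite{Bar-JosephB02}, which established the analogous statement for the classic counting primitive.
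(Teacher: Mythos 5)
The paper itself offers no proof of this lemma: it is imported wholesale by citation from \cite{Bar-JosephB02}, and your threshold-chasing verification is the natural way to check that the statement survives the transplant into \textsc{CheapQuantumConsensus}. Most of your argument is sound: with a unanimous value $v=1$, the one-sidedness of the fuzzy-counting guarantee does force $Z_p^r=0$ and hence $O_p^r=N_p^r>(7N_p^r-1)/10$ at every non-faulty $p$, so line~\ref{line:decided-one} fires uniformly and no process can land in a different threshold bucket. That part is airtight.

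The genuine gap is in your last step. You correctly observe that in iteration $r+1$ a process with $\texttt{decided}=TRUE$ only executes STOP if $\texttt{diff}=N_p^{r-2}-N_p^{r+1}\le N_p^{r-1}/10$, and that otherwise it resets $\texttt{decided}$ and loops again --- and then you assert that ``within the stated two iterations all survivors STOP'' anyway. These two sentences contradict each other: an adversary that crashes more than a $1/10$-fraction of the surviving processes in the relevant window defeats the diff test, and nothing in your argument caps the number of such failures at what fits inside two iterations (against the $n/3$-adversary this can happen in several consecutive iterations, and against the unbounded adversary of Theorem~\ref{thm:polylogarithmic-consensus} in $\Theta(\log n)$ of them). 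The lemma as used in the paper is only ever invoked inside \emph{good} phases --- the proof of Theorem~\ref{thm:consensus-constant-technical} explicitly budgets for at most $\lceil(1/3)/(1/10)\rceil$ bad phases separately --- so the two-iteration bound implicitly carries the hypothesis that the adversary does not crash a $1/10$-fraction of the live processes during the window the diff test inspects. Your proof needs either to state that hypothesis or to fold the extra iterations into the crash budget; as written, the unconditional two-iteration claim does not follow from your own case analysis.
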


\remove{
The correctness proof in Theorem 7 is incomplete. The protocol in [6] only tolerates $n/10$ corruption. Moreover, the authors also modifies the protocol in [6]. Thus, the correctness does not directly come from [6]

Response: We apologize for the ambiguity. Since there is a version of [6] that describes a protocol tolerating $t < n$ number of faults we decide to refer rather to this paper than applying a borrowed analysis in our paper. This relates also to the modifications whose influence is discussed in the proof of Theorem 7. Specifically, we show that any output of our quantum coin could be viewed as an output of the regular common coin with non-zero probability. Since the algorithm from [6] works with probability 1, any output of the new quantum coin cannot affect the correctness of the algorithm. However, we understand that these gaps might require more detailed analysis and we will be happy to include the full proof of correctness in the next version of our paper.  

!!!Tutaj nie wiem jak poprawic dowod Theorem 7. Moim zdaniem to jest czytelnie napisane jakie zmiany mamy w algorytmie.!!!

}

\begin{theorem}
\label{thm:consensus-constant-technical}
For any $\epsilon>0$, the \textsc{CheapQuantumConsensus} algorithm 
solves Consensus against $n/3$-adversary in $O\Big(\big(\frac{1}{\epsilon}\big)^{4}\Big)$ rounds in expectation while using $O(n^{3\epsilon})$ qubits and communication bits per process (amortized), whp.

\end{theorem}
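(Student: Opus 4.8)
The plan is to combine the Bar-Joseph--Ben-Or potential-based framework (imported through Lemma~\ref{lem:same-b}) with the two new subroutines, \textsc{FastCounting} and \textsc{CheapQuantumCoin}, analyzed via Theorems~\ref{thm:fuzzy-counting} and~\ref{thm:coin}. First I would fix the parameter choice $x=n^{\epsilon}$, $d=\log n$, $\alpha=n^{\epsilon}$ used in lines~\ref{line:fast-counting} and~\ref{line:quantum-flip}, and record the per-phase costs: by Theorem~\ref{thm:fuzzy-counting} each invocation of \textsc{FastCounting} runs in $O\big((1/\epsilon)^4\big)$ rounds with $O(n^{3\epsilon}\log^2 n)$ total communication bits (i.e.\ $O(n^{3\epsilon})$ amortized up to polylog factors, which we fold into the $n^{3\epsilon}$ by adjusting $\epsilon$), and by Theorem~\ref{thm:coin} each invocation of \textsc{CheapQuantumCoin} runs in $O\big((1/\epsilon)^3\big)$ rounds with $O(n^{3\epsilon})$ amortized qubits and classical bits per process. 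Hence a single phase of the while loop costs $O\big((1/\epsilon)^4\big)$ rounds and $O(n^{3\epsilon})$ amortized communication, and it remains to bound the expected number of phases by $O(1)$.

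For correctness (holding with probability $1$), I would argue that validity, agreement, and termination are inherited from the classical framework of~\cite{Bar-JosephB02} once we observe two things. (1) The counts $O_p^r,Z_p^r$ returned by \textsc{FastCounting} satisfy the fuzzy-counting guarantee: each lies between the final and initial number of processes holding the corresponding value during that call; this is exactly the property the thresholds $4/10,5/10,6/10,7/10$ in lines~\ref{line:decided-one}--\ref{line:decided-zero} are designed to tolerate, and it is what makes the ``decided/undecided'' bookkeeping (with the \texttt{diff} check over phases $r-3,r-2,r$) safe against the slack being at most the number of crashes. (2) The output of \textsc{CheapQuantumCoin} is \emph{some} bit in $\{0,1\}$ for every process regardless of the adversary, so replacing the classical common coin by our quantum coin can never violate the deterministic invariants of~\cite{Bar-JosephB02}; in particular, since the framework of~\cite{Bar-JosephB02} is correct with probability $1$ for any behavior of the coin, our algorithm is too. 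Then Lemma~\ref{lem:same-b} gives that once all non-faulty processes share a value, the algorithm decides within two more phases.

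For the expected round bound I would follow the standard argument: in any phase where processes reach line~\ref{line:quantum-flip}, the counts are in the ``balanced'' regime, so by Theorem~\ref{thm:coin} (using that the $n/3$-adversary crashes fewer than a $1/3$-fraction of the processes alive at the start of that phase) all non-faulty processes obtain the same coin value with constant probability $\rho\ge 1/4$; combining this with the $O(1)$-phase convergence of Lemma~\ref{lem:same-b}, each phase independently terminates the algorithm with probability $\Omega(1)$, so the expected number of phases is $O(1)$ and the expected running time is $O\big((1/\epsilon)^4\big)$. The short-circuit branch guarded by $N_p^r<\sqrt{n/\log n}$ only triggers when too many crashes have occurred, in which case a deterministic $\sqrt{n/\log n}$-round protocol finishes consensus among the few survivors; I would check this adds at most a constant additive term in expectation (it can be entered at most once since crashes are permanent), and does not affect the amortized communication because the number of surviving processes is then $O(\sqrt{n/\log n})$.

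The main obstacle, and where I would spend the most care, is the interaction between the \emph{inexactness} of \textsc{FastCounting} and the \emph{adaptivity} of the adversary across consecutive phases: I must verify that the ``decided'' flag combined with the \texttt{diff}-test over three phases genuinely forces convergence rather than allowing the adversary to oscillate processes indefinitely, and that the fuzzy-counting slack (bounded by the crash count of that phase) never pushes a process across a decision threshold inconsistently with another process. This is precisely the point where the lemmas of~\cite{Bar-JosephB02} must be re-examined rather than cited verbatim, because we have substituted an approximate counter for exact cardinalities; I would handle it by showing that the per-phase discrepancy between any two processes' counts is at most the number of crashes in that phase, which is strictly smaller than the gap between adjacent thresholds relative to $N_p^r$ in all the relevant regimes, so the qualitative decisions (``move to $1$'', ``move to $0$'', ``flip'') are consistent across non-faulty processes exactly as in the exact-count analysis.
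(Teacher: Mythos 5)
Your overall strategy is the same as the paper's: correctness is inherited from the Bar-Joseph--Ben-Or framework because \textsc{CheapQuantumCoin} always outputs \emph{some} bit (so any outcome has nonzero probability under local coin flips and cannot break a probability-$1$ correctness proof), the expected number of phases is $O(1)$ because a successful common coin plus Lemma~\ref{lem:same-b} forces termination within two further phases, and the round/communication bounds follow by plugging $x=\alpha=n^{\epsilon}$, $d=\log n$ into Theorems~\ref{thm:coin} and~\ref{thm:fuzzy-counting}.

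There is, however, one genuine gap in your phase-counting argument. You justify applying Theorem~\ref{thm:coin} in every coin-flipping phase by asserting that the $n/3$-adversary ``crashes fewer than a $1/3$-fraction of the processes alive at the start of that phase.'' That is not implied by the adversary's budget: it may hoard its $n/3$ crashes and spend them all inside a single phase, in which case far more than a $1/3$-fraction of the processes alive at the start of that phase (and of those that started the coin protocol) can fail, and Theorem~\ref{thm:coin} gives no guarantee for that phase. Consequently your claim that \emph{each} phase independently terminates with probability $\Omega(1)$ is false as stated. The paper closes this hole with a budget argument: call a phase \emph{good} if at most a $1/10$-fraction of the processes correct at its start crash during it; since the total budget is $n/3$, there are fewer than $4$ bad phases, and every good phase (split into the three cases of which thresholds, if any, are crossed) terminates the algorithm within two further iterations with probability at least $1/4$. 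You need some version of this amortization; without it the expectation bound does not follow. A secondary, smaller omission: in a phase where some processes cross a threshold deterministically while others flip the coin, you need the coin's guarantee that \emph{for each fixed} $b\in\{0,1\}$ all good players output $b$ with probability at least $\rho$ (Definition~\ref{def:coin}), so that the coin-flippers agree with the deterministic setters with constant probability; saying only that ``the counts are in the balanced regime'' elides this mixed case, which is exactly what the paper's case analysis (a)/(b)/(c) handles.
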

\begin{proof}
First, we argue for correctness. Compared to the protocol of Bar-Joseph and Ben-Or~\cite{Bar-JosephB02}, which works for an arbitrary number of failures $t\le n$, we changed the method of deriving random coin, c.f., line~\ref{line:quantum-flip}. Observe that even if \textsc{CheapQuantumCoin} fails to meet conditions of $t$-resilient coin flip, it always outputs some bit in every non-faulty processes. Thus, regardless of number of crashes the output could be an output of local coin flips with a non-zero probability. Since Bar-Josephs's and Ben-Or's algorithm works with probability $1$ (see Theorem~3 in~\cite{Bar-JosephB02}), thus \textsc{CheapQuantumConsensus} also achieves correctness with probability $1$.

Next, we estimate the expected number of phases (i.e. the number of iterations of the main while loop). We consider only \emph{good} phases, i.e. phases in which the adversary crashed at most $\frac{1}{10}$ fraction of processes that were correct at the beginning of this iteration. Note, that there can be at most $\frac{1}{3} / \frac{1}{10} < 4$ "bad" phases.
Let $x$ be the number of non-faulty processes at the beginning of some good phase. We consider following cases:
\\ \noindent Case \textit{a)} There exists a process that in this iteration executes line~\ref{line:undecided-one} or line~\ref{line:decided-one}. In this case, all other processes have to execute line~\ref{line:decided-one} or line~\ref{line:quantum-flip}, since the number of ones counted in different processes may differ by $\frac{x}{10}$ at most. All processes that in this iteration execute \textsc{CheapQuantumCoin} will set $b_{p}$ to $1$ with probability $\frac{1}{4}$ at least. What follows, in the next iteration all processes will start with $b_{p}$ set to $1$ and by Lemma~\ref{lem:same-b} the algorithms will decide within two next phases.
\\ \noindent Case \emph{b)} There exists a process that in this phase executes line~\ref{line:undecided-zero} or line~\ref{line:decided-zero}. Similarly to the previous case, we observe that all other processes have to execute line~\ref{line:decided-zero} or line~\ref{line:quantum-flip}, since, again, the number of ones counted in different processes may differ by $\frac{x}{10}$ at most. Therefore the same arguments applies, but know the final decision will be $0$.
\\ \noindent Case \emph{c)} None of processes executes one of lines~\ref{line:decided-one},~\ref{line:undecided-one},~\ref{line:undecided-zero}, or~\ref{line:decided-zero}. Thus, all processes participated in \textsc{CheapQuantumCoin} in line~\ref{line:quantum-flip}. By Theorem~\ref{thm:coin}, with probability at least $\frac{1}{4}$, all processes will set value $b_{p}$ to the same value. Thus, again by applying Lemma~\ref{lem:same-b}, we get that the algorithms will decide within two next phases.

We showed that if good phase happens, then the algorithm terminates within $2$ next iterations with probability at least $\frac{1}{4}$. Since there can be at most $4$ bad iterations, thus we can calculate the expected number of iterations as follows
\[
\E(ITE) = \sum_{i = 4}^{\infty} i \bigg(\frac{1}{4}\bigg)^{i} = O(1)
\ .
\]
Executing a single phase takes $O\Big(\big(\frac{1}{\epsilon}\big)^{4}\Big)$ rounds, which is the round complexity of the \textsc{FastCounting} algorithm and an upper bound of the time complexity of the \textsc{CheapQuantumCoin} algorithm, therefore the algorithm terminates in $O\Big(\big(\frac{1}{\epsilon}\big)^{4}\Big)$ rounds in expectation. Similarly, by taking the upper bounds on the communication complexity of the algorithms \textsc{FastCounting} and \textsc{CheapQuantumCoin} we get that the expected number of amortized communication bits used by the algorithm is $O(n^{3\epsilon})$.
\end{proof}
Finally, let us analyze the modified version of the algorithm \textsc{CheapQuantumConsensus} with the difference that processes use parameters $x := 2, d = \log{n}, \alpha := \log{n}$ in lines~\ref{line:fast-counting} and \ref{line:quantum-flip}.

\begin{theorem}
\label{thm:consensus-polylog-technical}
The modified version of the \textsc{CheapQuantumConsensus} algorithm 
solves Consensus against any adversary in $O(\log^{5}n)$ rounds in expectation while using $O(\log^{8}{n})$ qubits and communication bits per process (amortized), whp.

\end{theorem}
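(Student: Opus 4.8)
The plan is to reuse the three-part structure of the proof of Theorem~\ref{thm:consensus-constant-technical} --- correctness with probability~$1$, a bound on the expected number of phases, and a per-phase cost estimate --- and to replace the single ingredient that genuinely fails against an unbounded adversary, namely the claim that only a constant number of phases are ``bad''. Correctness is dispatched exactly as before: the modified \textsc{CheapQuantumConsensus} differs from the Bar-Joseph--Ben-Or protocol of~\cite{Bar-JosephB02} only in lines~\ref{line:fast-counting} and~\ref{line:quantum-flip}; since \textsc{FastCounting} always returns, at each non-faulty process, a value between the final and initial active counts, and \textsc{CheapQuantumCoin} always outputs \emph{some} bit at each non-faulty process, every execution of the modified protocol coincides with a legal execution of the original protocol for a positive-probability choice of the returned counts and of the ``local coin'' outcomes. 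As the protocol of~\cite{Bar-JosephB02} is correct with probability $1$ for every $t\le n$, so is ours.

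The heart of the proof is bounding the expected number of phases. Call a phase \emph{good} if at most a $\tfrac1{10}$-fraction of the processes non-faulty at its start are crashed during it. In a good phase the two hypotheses used by the convergence analysis both hold: the outputs of \textsc{FastCounting} at distinct processes differ by at most the number of intra-phase crashes, hence by at most a tenth of the start-of-phase population, so the $\tfrac1{10}$-spaced thresholds of lines~\ref{line:decided-one}--\ref{line:decided-zero} are applied consistently across processes; and fewer than a third of the start-of-phase processes are crashed, so \textsc{CheapQuantumCoin} satisfies Definition~\ref{def:coin} with constant fairness by Theorem~\ref{thm:coin}. Replaying Cases \emph{a)}--\emph{c)} of the proof of Theorem~\ref{thm:consensus-constant-technical} verbatim then shows that, from a good phase in which no process has yet decided, all non-faulty processes come to share a common value within two further phases with probability at least $\tfrac14$, and hence decide by Lemma~\ref{lem:same-b}. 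The new observation is that a \emph{bad} phase shrinks the non-faulty population by a factor strictly below $\tfrac9{10}$, and since at least one process must always survive, any block of $K=\lceil\log_{10/9}n\rceil+1$ consecutive phases must contain a good one. Grouping phases into super-windows of $K+2$ consecutive phases, each super-window contains a good phase among its first $K$ members, so the protocol decides within that super-window with probability at least $\tfrac14$ independently of the past; therefore the number of super-windows until termination is stochastically dominated by a geometric random variable with parameter $\tfrac14$, and the expected number of phases is $O(K)=O(\log n)$.

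The per-phase cost is then read off by substituting the meta-parameters. With $x=2$ and $d=\alpha=\log n$, Theorem~\ref{thm:fuzzy-counting} gives $O(\log^4 n)$ rounds and, whp, a polylogarithmic number of communication bits per process for each \textsc{FastCounting} call, and Theorem~\ref{thm:coin} gives $O(\log^3 n)$ rounds and, whp, polylogarithmically many qubits and classical bits per process for each \textsc{CheapQuantumCoin} call; thus a phase costs $O(\log^4 n)$ deterministic rounds and, whp, polylogarithmic communication per process (the low-population fall-back branch in the main loop, with its threshold and length rescaled to a polylogarithmic value, runs within the same budget and is entered at most once). Multiplying the per-phase round cost by the expected $O(\log n)$ phases yields $O(\log^5 n)$ expected rounds; multiplying the per-phase communication by $O(\log n)$ phases and union-bounding the whp guarantees over the phases yields $O(\log^8 n)$ amortized qubits and communication bits per process, whp.

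The main obstacle is exactly the middle step. For an $n/3$-adversary one simply notes that there are at most $\tfrac13\big/\tfrac1{10}<4$ bad phases and folds them into a geometric tail; against an unbounded adversary this is false, and one must instead exploit that crashing is a non-renewable resource --- bad phases deplete the population geometrically, so they cannot all be packed together, which forces a good phase into every logarithmic window. The delicate point is getting the independence clean enough to convert ``a good phase decides with probability $\ge\tfrac14$'' into a geometric bound on the number of super-windows: in particular, one has to use that the fairness guarantee of \textsc{CheapQuantumCoin} is against an \emph{adaptive} adversary, so it survives conditioning on the (adaptively chosen, history-dependent) crash pattern that determines which phase of the window turns out to be good.
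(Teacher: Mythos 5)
Your proposal follows essentially the same route as the paper: correctness by reduction to the Bar-Joseph--Ben-Or framework, an $O(\log n)$ bound on the number of bad phases (since each bad phase kills a constant fraction of the surviving population, which cannot happen more than logarithmically often), a constant-probability decision after each good phase, and a per-phase cost read off from Theorems~\ref{thm:coin} and~\ref{thm:fuzzy-counting} with $x=2$, $d=\alpha=\log n$. Your middle step is in fact a more careful rendering of the paper's one-line assertion (the paper's displayed expectation formula is garbled, whereas your super-window/geometric-domination argument makes the intended reasoning precise), but the underlying idea is identical.
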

\begin{proof}
For the correcntess we argue exactly the same as in the proof of previous Theorem.
We also define good and bad phases, with only this differences that now the number of bad phases is at most $O(\log{n})$, since the adversary has the full power of crashing arbitrary number of processes. This being said we get that, by the very same reasoning, that the expected number of phases is
\[
\E(ITE) = \sum_{i = \log{n}}^{\infty} i \bigg(\frac{1}{4}\bigg)^{i} = \Theta(\log{n})
\ .
\]
By examining the time and bits complexity of the algorithms \textsc{FastCounting} and \textsc{CheapQuantumCoin} with parameters $x = 2, d = \log{n}, \alpha = \log{n}$, we get a single phase lasts $O(\log^{4}{n})$ rounds and contributes $O(n\log^{7}{n})$ bits to the total communication complexity. The latter, after dividing by $n$, gives the sought complexity amortized per process. Thus, the theorem follows.
\end{proof}
\section{Omitted proofs from Section~\ref{sec:quantum-coin}}\label{sec:coin-analysis}
\begin{proof}[Proof of Lemma~\ref{lem:surviving-probing}]
Consider two last iterations of the epoch $i$. Since the variable $\texttt{degree}_{p}$ has not changed it the epoch, thus also the variable $\texttt{adaptive\_degree}_{p}$ has remained unchanged in the epoch. In particular, examining line~\ref{line:while} assures, that in the last two testing rounds of this epoch, process $p$ received at least $\delta_{\alpha}$ responses from processes that had the variable $\texttt{adaptive\_degree}$ at least $\texttt{degree}_{p}$. 

Next, we proceed by induction. We claim, that in the $2j$ and $2j - 1$, for $1 \le j \le \gamma_{\alpha}$ last testing rounds there existed $(j, \delta_{\alpha})$-dense-neighborhood of $p$ with the properties the every processes belonging to this neighborhood is non-faulty and has its variable $\texttt{adaptive\_degree}$ at least $\texttt{adaptive\_degree}_{p}$. The induction step goes as follows. Assume that there exists $(j, \delta_{\alpha})$-dense-neighborhood of $p$ with the mentioned properties. In particular in the $2(j+1), 2(j+1) - 1$ last testing rounds, every process from this set had to receive at least $\delta_{\alpha}$ responses from processes with the variable $\texttt{adaptive\_degree}$ at least as big as the variable \texttt{adaptive\_degree} of the process. This follow from inspecting line~\ref{line:while}. The set of processes who responded with large values of \texttt{adaptive\_degree} constitute to the set $N^{j + 1}(p)$.

Eventually, we obtain the existence of the set $N^{\gamma_{\alpha}}(p)$ which satisfies the property of $(\gamma_{\alpha}, \delta_{\alpha})$-dense-neighborhood. Each process from the neighborhood has the variable $\texttt{degree}$ at least $\texttt{degree}_{p}$, since for every process $q$ it holds that $\texttt{degree}_{q} \ge \texttt{adaptive\_degree}_{q}$.
\end{proof}

\begin{proof}[Proof of Lemma~\ref{lemma:small-sets}]
Assume to the contrary that there exists an epoch such that more than $\frac{16n}{3\alpha^{i-2}}$ processes start this epoch with the variable $\texttt{degree}$ set to a value greater than $d\alpha^{i}$. Assume also w.l.o.g. that $i$ is the \emph{first} such epoch and let $A$ be the set of processes that have the variable $\texttt{degree}$ set to at least $d\alpha^{i}$ at the beginning of this epoch. Let $C$ be any survival set for $A$ with respect to the graph $\mathcal{G}(d\alpha^{i - 1})$. Note that since $A$ is a set of processes that are correct in epoch $i$, thus the processes from $C$ have been behaving correctly in epoch $1, \ldots, i -1 $ inclusively. Since the graph $\mathcal{G}(d\alpha^{i - 1})$ is $(16\frac{n}{d\alpha^{i - 2}},3/4,\frac{2}{3}\alpha)$-compact and $|A| \ge 16\frac{n}{d\alpha^{i - 2}}$, thus $C$ exists. At the beginning of the epoch $i$, the variable $\texttt{degree}$ of all processes from $C$ is greater than $d\alpha^{i}$, thus there must be a round $j < i$ in which the last process from $C$, say $q$, increases its variable $\texttt{degree}$ to $d\alpha^{i}$. But this gives us the contradiction. In the epoch $j$, all processes from $C$ operates in the graph $\mathcal{G}(d\alpha^{i - 1})$, thus they have at least $\delta_{\alpha}$ neighbors in $C$. All these neighbors have the variable $\texttt{degree}$ greater than $d\alpha^{i-1}$. In particular, in this epoch, the process $q$ will not execute line~\ref{line:if} and therefore it will not increase its variable $\texttt{degree}_{p}$ which give a contradiction with the maximality of $j$ and in consequence the minimality of $i$.
\end{proof}

Note that in the above proof of Lemma~\ref{lemma:small-sets} we use the fact that a suitable set $C$ of processes that have been correct throughout the whole epoch exists. We may choose this set and argue about it after the epoch, as the communication pattern in the algorithm is deterministic. Hence, in any round of the epoch,  processes in $C$ have at least as many non-faulty neighbors in communication graph as they have neighbors from set $C$. We use this number of neighbors in $C$ as a {\em lower bound} to argue about behavior of variables $\texttt{degree}$; therefore, our arguments do not depend on behavior of processes outside of $C$ and whether/when some of them fail during the epoch.

\begin{lemma}\label{lemma:non-faulty-ent}
Any two non-faulty processes $p$ and $q$ were connected by a quantum path of communication during the run of the algorithm. 
\end{lemma}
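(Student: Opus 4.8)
The plan is to reduce the statement to a covering-and-overlapping argument. I would attach to every non-faulty process $p$ a large \emph{witness set} $C_p$ of non-faulty processes to which $p$ is provably joined by quantum paths formed during the testing rounds, and then show that the witness sets of any two non-faulty processes intersect; since quantum paths concatenate, this places all non-faulty processes into a single quantum-connected class. Here ``non-faulty'' means ``survives the whole execution of \textsc{CheapQuantumCoin}'', and I use the standing assumption of Theorem~\ref{thm:coin} that at most a $\frac13$-fraction of the initially non-faulty processes crash.

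\emph{Producing the witness sets.} The variable $\texttt{degree}_p$ is non-decreasing and takes only the $k+1$ values $d,d\alpha,\dots,d\alpha^{k}$, while there are $(k+2)^2$ epochs; hence $p$ raises it at most $k$ times and settles on a final value $d_p^{\star}$ that it keeps in all late epochs. Pick an epoch $i_p$ in which $p$ already has $\texttt{degree}_p=d_p^{\star}$ and does not increase it. Lemma~\ref{lem:surviving-probing} then hands us a $(\gamma_\alpha,\delta_\alpha)$-dense-neighborhood $C_p$ of $p$ in the graph $\mathcal{G}(d_p^{\star})$, consisting of non-faulty processes whose $\texttt{degree}$ in epoch $i_p$ is at least $d_p^{\star}$. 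Unwinding the inductive construction in that proof, every member of $C_p$ lies within $\gamma_\alpha$ hops of $p$ along edges of $\mathcal{G}(d_p^{\star})$ each of which carried a response (line~\ref{line:responses}), and every response transmits a quantum register $\ket{B_q}$; hence $p$ is quantum-connected to all of $C_p$. Moreover, combining the branching $\delta_\alpha=\frac23\alpha$ kept at each of the $\gamma_\alpha=\log n/\log\alpha$ hops with the expanding property~(i) of $\mathcal{G}(d_p^{\star})$, the expanding ball cannot collapse below the expansion threshold of $\mathcal{G}(d_p^{\star})$, so $|C_p|=\Theta(\alpha n/d_p^{\star})$; dually, Lemma~\ref{lemma:small-sets} caps the number of processes of final degree at least $d\alpha^{j}$ by $O(\alpha^{c}n/(d\alpha^{j}))$ for an absolute constant $c$.

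\emph{Merging the witness sets.} Take non-faulty $p,q$ and suppose first $d_p^{\star}=d_q^{\star}=d\alpha^{j}$. Then $C_p$ and $C_q$ are two vertex sets of $\mathcal{G}(d\alpha^{j})$, each larger than its expansion threshold, so by property~(i) there is an edge $\{u,v\}$ of $\mathcal{G}(d\alpha^{j})$ with $u\in C_p$, $v\in C_q$; each endpoint passes through $\texttt{degree}=d\alpha^{j}$ at some epoch (its final degree is at least $d\alpha^{j}$), and at that epoch it inquires the other endpoint, which, being non-faulty, is alive and answers with a quantum register --- so the link is used and $C_p,C_q$ are quantum-joined. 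For $d_p^{\star}<d_q^{\star}$ I chain this through the $O(k)$ intermediate levels: a process $w$ eventually forced up from density level $j-1$ to level $j$ did, in the earlier epochs while it still used $\mathcal{G}(d\alpha^{j-1})$, take part in that sparser graph --- either inquiring its (alive) $\mathcal{N}_w(d\alpha^{j-1})$-neighbors or being inquired by lower-level processes and answering them with quantum registers --- and these ``downward'' quantum links splice the class of level-$j$ processes onto that of level-$(j-1)$ processes; performing this merge for $j=k,k-1,\dots,1$ collapses everything into one class that contains every non-faulty process, which is the claim.

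The hard part is the cross-level merge. Because the graphs $\mathcal{G}(d\alpha^{i})$ are drawn \emph{independently} at the different densities, a connecting link must be localized to one density level and, at the same time, must be one that both of its endpoints actually used while alive; this forces me to apply the expansion property to the \emph{sparsest} of the two graphs in play and to exploit that, once the random sets $\mathcal{N}_p(\cdot)$ are fixed, the communication schedule is deterministic, so the relevant epoch and edge can be located after the execution and the argument depends only on the non-faulty processes under consideration, not on when the adversary crashes the others (the same feature used in the remark following the proof of Lemma~\ref{lemma:small-sets}). With the connectivity established, Lemma~\ref{lem:chance-win} applied to $A=\{\text{all non-faulty processes}\}$ upgrades the coin's fairness, as needed for Theorem~\ref{thm:coin}. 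The routine parts --- that $|C_p|$ actually meets the threshold, and that the level-by-level merge uses $O(k)$ steps --- mirror the bookkeeping in the proofs of Lemmas~\ref{lem:surviving-probing} and~\ref{lemma:small-sets}.
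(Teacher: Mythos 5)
There is a genuine gap, and it is exactly the step your proposal flags as ``the hard part.'' The crux of the paper's proof is a pigeonhole argument that you do not make: since $\texttt{degree}_p$ takes at most $k+1$ values and there are $(k+2)^2$ epochs, there is a run of $k+2$ consecutive epochs in which $\texttt{degree}_p$ is frozen, and inside that run two consecutive epochs $i,i+1$ in which $\texttt{degree}_q$ is \emph{also} frozen. This synchronization is what the quadratic number of epochs is for, and it is what makes the rest work: Lemma~\ref{lem:surviving-probing} then produces the two dense neighborhoods $A$ (of $p$) and $B$ (of $q$) \emph{coexisting at the same epoch}, both of size at least $n\alpha/\texttt{degree}$, so both exceed the expansion threshold of the denser of the two graphs; property (i) gives a cross edge that is actually exercised in the first testing round of epoch $i+1$, and the remaining $\gamma_\alpha$ testing rounds of that epoch suffice to relay across $B$. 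In particular, unequal degrees $\texttt{degree}_p\neq\texttt{degree}_q$ are handled in one stroke --- no level-by-level chaining is needed at all.

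Your construction instead attaches the witness set $C_p$ at a process-specific epoch $i_p$, attaches $C_q$ at a possibly different epoch $i_q$, and exercises the connecting edge $\{u,v\}$ at whatever epoch $u$ happens to sit at degree level $d\alpha^{j}$. These three events can occur in arbitrary temporal order, and quantum ``connection by relays'' is not a static graph property: a path whose middle edge is a communication event that occurs \emph{before} its first edge does not transport a register (the swap-the-larger-value mechanism underlying Lemma~\ref{lem:chance-win} requires the chain of exchanges to be time-ordered). So even the equal-degree merge is not established, and the cross-level merge is asserted (``these downward quantum links splice the classes'') rather than proved --- you never identify which links are used, in which rounds, by which surviving processes, nor why they compose in time with the intra-$C_p$ and intra-$C_q$ paths. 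To repair the argument you would need to reintroduce the common stable window, at which point you recover the paper's proof and the chaining machinery becomes unnecessary.
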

\begin{proof}
First observe, that the variable $\texttt{degree}_{p}$ can take at most $k + 1$ different values. Since there is $(k + 2)^{2}$ epochs, thus by the pigeonhole principle there must be a sequence of consecutive $k + 2$ epochs in which the variable $\texttt{degree}_{p}$ remains unchanged. Similarly, among these $k + 2$ epochs there must be two epochs in which the variable $\texttt{degree}_{q}$ remains unchanged. Let us denote these two epochs $i$ and $i + 1$. Since the variable $\texttt{degree}_{p}$ has not changed in the epoch $i + 1$, thus by applying Lemma~\ref{lem:surviving-probing}, we get that there exists a $(\gamma_{\alpha}, \delta_{\alpha})$-dense-neighborhood of $p$ in the graph $\mathcal{G}(\texttt{degree}_{p})$, say $A$, consisting of processes that a) were non-faulty in the epoch b) their variable $\texttt{degree}$ were at least $\texttt{degree}_{p}$. An immediate consequence is that all processes from $A$ were non-faulty in the epoch $i$. Moreover, since $A$ is $(\gamma_{\alpha}, \delta_{\alpha})$-dense-neighborhood of $p$ in the graph $\mathcal{G}(\texttt{degree}_{p})$ thus its size is at least $\frac{n\alpha}{\texttt{degree}_{p}}$ (Lemma~1~\cite{CK-PODC09}). Let $B$ be the analogues $(\gamma_{\alpha}, \delta_{\alpha})$-dense-neighborhood of $q$ derived from the properties of the graph $\mathcal{G}(\texttt{degree}_{q})$.

Now, consider quantum communication between sets $A$ and $B$ in the first testing round of the epoch $i + 1$. Processes from $A$ use the graph $\mathcal{G}(\texttt{degree}_{p})$ to communicate (or a denser graph) while processes from $B$ use the graph $\mathcal{G}(\texttt{degree}_{q})$ (or a denser graph). The graph $\mathcal{G}(\texttt{degree}_{p})$ is $(\frac{n\alpha}{\texttt{degree}_{p}})$-expanding and the graph $\mathcal{G}(\texttt{degree}_{q})$ is at least $(\frac{n\alpha}{\texttt{degree}_{q}})$-expanding. Therefore, at least one process from set $A$ inquires a process from set $B$ when executes line~\ref{line:inquires} in this testing round if $\texttt{degree}_{p} \ge \texttt{degree}_{q}$; or vice-versa if $\texttt{degree}_{p} < \texttt{degree}_{q}$. Since the set $B$ has diameter $\gamma_{\alpha}$ (Lemma~1~\cite{CK-PODC09}) and there remains $\gamma_{\alpha}$ testing rounds in the epoch $i +1$, thus by the end of this epoch any quantum messages send from $p$ could reach $q$.
\end{proof}

\begin{proof}[Proof of Theorem~\ref{thm:coin}]
Let $\mathcal{H} \subseteq \mathcal{P}$ be the set of initially non-faulty processes. Assume that at least $\frac{2}{3}|\mathcal{H}|$ of them remains non-faulty during the execution of the algorithm. By Lemma~\ref{lemma:non-faulty-ent}, any pair of processes from $\mathcal{H}$ is connected by quantum communication, therefore applying Lemma~\ref{lem:chance-win} to this set, gives that there is at least $\frac{2}{3} - o(1)$ (which is greater than $\frac{1}{2}$ for sufficiently large n) probability that all non-faulty processes return the same output bit. Since $0$ and $1$ are equally likely, thus the probabilistic guarantee on the weak global coin follows.
 
The number of rounds is deterministic and upper bounded by $O(k^2 \cdot \gamma_{\alpha}) = O\Big( \big(\frac{\log(n / d)}{\log{\alpha}}\big)^{2} \frac{\log{n}}{\log{\alpha}} \Big) = O\Big( \big(\frac{\log{n} }{\log{\alpha}}\big)^{3}\Big)$. To bound the communication complexity, assume that each graph $\mathcal{G}(d\alpha^{i})$, for $0 \le i \le k$, satisfies the desired expanding properties listed in the description of the algorithm. This, by the union bound argument, holds whp. By Lemma~\ref{lemma:small-sets} at the beginning of each epoch there is at most $\frac{16n}{d\alpha^{i - 2}}$ processes that inquires more than $d\alpha^{i}$ other processes in testing rounds of this epoch, for each $0 \le i \le k$. Since each message uses at most $O(\log{n})$ bits and qubits, thus a single testing round in an epoch adds at most
$$\sum_{i = 0}^{k} \frac{16n}{d\alpha^{i - 2}} \cdot d\alpha^{i} \le 16kd\alpha^{2} \cdot n\log{n}$$
qubits and bits to communication complexity. Since there is exactly $O\Big( \big(\frac{\log{n} }{\log{\alpha}}\big)^{3}\Big)$ testing rounds, thus the $O\Big( \big(\frac{\log{n} }{\log{\alpha}}\big)^{4}d\alpha^{2} \cdot n\log{n}\Big)$ upper bound on the total communication complexity of the algorithm follows. Dividing the latter by $n$ we receive amortized formula per process. %
\end{proof}

\section{The Analysis of FastCounting algorithm}\label{sec:fast-counting-analysis}
\begin{proof}[Proof of Theorem~\ref{thm:fuzzy-counting}]
Let us first argue for the correctness. If $|\mathcal{P}| = 1$, then the conditional statement in line~\ref{line:base-case} proves the correctness. Next, we proceed by induction. Assume that the correctness is proved for any subset $\mathcal{P}' \subseteq \mathcal{P}$. Therefore, whenever a process $p$ executes a recursive call in line~\ref{line:recursive-counting}, its variable $\texttt{active}_{p}$ is assigned to the number being an estimation of non-faulty active processes in the set $\mathcal{P}_{g}$. Consider now set $\{ \texttt{active}_{p} : p \in \mathcal{P} \}$. This is the set of input messages to the execution of the Gossip algorithm, see line~\ref{line:fast-gossip}. Thus, by Theorem~\ref{thm:fast-gossip}, we get that unless all process from a group $\mathcal{P}_{p}$ crashes, then all other non-faulty processes will have the number $\texttt{active}_{p}$ in their sets $\mathcal{R}$ (line~\ref{line:fast-gossip}). Therefore, the sum returned in line~\ref{line:counting-output} is the actual estimate of the number of all-non faulty processes in the set $\mathcal{P}$, since it can differ from the number of all processes starting the algorithm only by the number of crashed processes.

The bounds on time and communication complexity follows immediately from the bounds on complexity of the Gossip algorithm proposed in Theorem~\ref{thm:fast-gossip} and the fact that depth of the recursion at most $O(\log{n} / \log{x})$.
\end{proof}
\section{Further Results}
\label{sec:auxiliary}

\remove{
\jo{Reviewer 1: Are propositions 1 and 2 correct? I could not follow the argument. It is written that the adversary crashes any process to which p wants to send a message or who sends a message to p. But then why are you counting more than a single such message and saying that it is at most $o(n)$? If the process is crashed for the first message it wants to send to p then it definitely does not send more. I may be missing something here but it is really hard to understand the arguments.

Our response:
 The sum of probabilities is for fixed p over other processes q (sorry for not being precise), so by contradictory assumption it is o(n) with a constant>0 probability. Hence, there is a pair p,q with all four probabilities $o(1)$, which further leads to violating agreement with probability $1-o(1)$ in some execution (by indistinguishability of some executions by p,q). 
}
}

\begin{lemma}
\label{lem:worst-case-comm}
For any Consensus algorithm there is a strategy of an adaptive adversary under which some process sends $\Omega(n)$ messages with non-zero constant probability.  
\end{lemma}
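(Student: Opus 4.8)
The plan is to argue by contradiction, using that \emph{agreement} must hold with probability $1$. Assume that no adaptive adversary forces a process to send $\Omega(n)$ messages with constant probability; then for every adaptive adversary and every constant $c>0$, with probability $1-o(1)$ every process sends fewer than $cn$ messages. I will use several instances of this hypothesis (one per candidate adversary strategy below) to manufacture an execution in which two correct processes output different values.

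The first step is a counting/averaging argument that locates a ``silent pair''. For a process $r$ and a bit $b$, consider the adversary that crashes all processes except $r$ in round~$1$ and sets $r$'s input to $b$; by the hypothesis, with probability $1-o(1)$ process $r$ then sends fewer than $cn$ messages, so the set $A_{r,b}$ of processes it ever tries to contact satisfies $\E|A_{r,b}| = o(n)$. Hence $\sum_{r\ne s}\Pr[s\in A_{r,0}] = o(n^{2})$, and likewise for the other bit, so since there are $\Theta(n^{2})$ pairs, Markov's inequality yields a pair $(p,q)$ for which the four probabilities $\Pr[q\in A_{p,0}]$, $\Pr[p\in A_{q,1}]$, $\Pr[q\in A_{p,1}]$, $\Pr[p\in A_{q,0}]$ are each $o(1)$. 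In particular, with probability $1-o(1)$, in the solo execution where $p$ has input $0$ (resp.\ $q$ has input $1$), process $p$ never tries to contact $q$ (resp.\ $q$ never tries to contact $p$); and a coupling of executions up to the first round at which a process could detect a difference shows that an isolated copy of $p$ running with input $0$ decides $0$ and an isolated copy of $q$ running with input $1$ decides $1$ (by \emph{termination} and indistinguishability from the unanimous solo executions).

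Then I would assemble the contradicting execution $\mathcal E$: its input assignment gives $p$ the bit $0$, $q$ the bit $1$, and the remaining processes bits chosen so that $p$'s ``side'' is all $0$ and $q$'s ``side'' is all $1$; the (unbounded, adaptive) adversary crashes, one process at a time and just before it would speak, every process that is about to carry information across the two sides, and --- should $p$ ever try to contact $q$, or vice versa --- crashes the addressee before delivery. The silent-pair property guarantees that, with probability $1-o(1)$, the last clause never triggers, so $p$ and $q$ both survive; another coupling up to the first distinguishing round then shows $p$'s view equals its view in the all--$0$ solo execution and $q$'s view equals its view in the all--$1$ solo execution, whence $p$ decides $0$ and $q$ decides $1$, contradicting agreement. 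The hard part --- and the reason the lemma needs care --- is the middle clause: one must exhibit a concrete adaptive crash rule that provably keeps both sides alive and mutually uninformed throughout, since a message-efficient protocol may try to funnel cross-side information through a sparse, adaptively chosen relay structure. Ruling this out is exactly where the hypothesis is re-invoked once more: if the relays are too few, some single relay (or $p$, or $q$) must itself send to $\Omega(n)$ distinct recipients before one side can be suppressed, which is the contradiction we want; if instead the adversary successfully suppresses one side, we reach disagreement. Verifying that this trade-off is airtight, and that all the $o(1)$ probabilities invoked along the way hold simultaneously, is the bulk of the proof --- the indistinguishability/coupling claims themselves being routine by comparison.
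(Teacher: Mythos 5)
Your first half --- the averaging argument that extracts a pair $(p,q)$ such that, in the relevant isolation executions, $p$ contacts $q$ and $q$ contacts $p$ each with probability $o(1)$ --- is essentially the paper's argument, and it is sound. The gap is in your second half, and you have correctly located it yourself: the contradicting execution. You try to keep two populated ``sides'' alive (an all-$0$ side around $p$ and an all-$1$ side around $q$) and have the adversary adaptively crash any process about to carry information across the cut. This is the hard construction used in lower bounds where the adversary's crash budget is limited, and as you note, it is not clear how to define ``about to carry information across the sides'' for an arbitrary protocol, nor how to preserve the coupling with the unanimous reference executions once side-mates start being crashed at input-dependent times (the reference execution would need the same crash schedule, but your crash rule is defined in terms of the two-sided input, which is circular). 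As written, the proof does not close.

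The fix is to notice that this machinery is unnecessary here: the lemma places no bound on the number of crashes, so the adversary may simply crash \emph{every} process except $p$ and $q$ at time $0$, assign input $0$ to $p$ and input $1$ to $q$, and rely on the silent-pair property to conclude that with probability $1-o(1)$ the two survivors never exchange a message. Then $p$'s view is identical to its view in the execution where all processes start with $0$ and all but $p$ are crashed immediately, so validity (not termination, which is the condition you invoke) forces $p$ to decide $0$ with probability $1$; symmetrically $q$ decides $1$; agreement is violated with probability $1-o(1)$. With this replacement your argument becomes the paper's proof.
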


\begin{proof}
Suppose otherwise.
Consider any process $p$ and two executions, in one of them $p$ starts with value $0$ and in the other starts with $1$. In both executions, the adversary fails all processes that want to send a message to $p$ and all to whom $p$ sends a message.
For each pair $(p,q)$, we obtain two probabilities $\alpha(p,q),\beta(p,q)$, of $p$ sending a message to $q$ in the first and the second execution.
The sum of probabilities $\sum_{q \in G} \alpha(p,q) + \beta(p,q)$ is $o(n)$ with constant probability $c > 0$, by contradictory assumption (recall that this sum is for the fixed process $p$).

Repeat the above for every process $p$. This way we obtain a complete weighted graph (by weights $\alpha(\cdot,\cdot)$ and $\beta(\cdot,\cdot)$). Again, by contradictory assumption, the total sum of all weights is $o(n^2)$ with the same constant probability $c > 0$, therefore, because the number of pairs is $n(n-1)$, there is a pair $p,q$ such that
\[
\alpha(p,q)+\alpha(q,p)+\beta(p,q)+\beta(q,p) = o(1)
\ .
\]
Consider an execution $\cE$ in which the adversary crashes all but processes $p,q$ in the very beginning. It assigns value $0$ to $p$ and value $1$ to $q$. It follows that in this execution $p,q$ try to communicate directly with probability $\alpha(p,q)+\beta(q,p) = o(1)$.
With complementary probability, $1-o(1)$, they do not communicate. 
In this case, $p$ cannot distinguish this execution $\cE$ from another execution $\cE_p$ in which all processes start with value $0$ but the adversary crashes all of them but $p$ in the beginning -- hence, by validity condition of consensus, $p$ must decide $0$ in both of these executions.
Similarly, $q$ cannot distinguish the constructed execution $\cE$ from execution $\cE_q$ in which all processes start with value $1$ but the adversary crashes all of them but $q$ in the beginning -- hence, by validity condition of consensus, $q$ must decide $1$ in both of these executions. Both the above facts must hold with probability $1$, as validity is required with probability $1$.
Consequently, in the constructed execution $cE$, $p$ decides $0$ while $q$ decides $1$, with probability $1-o(1)$, which violates agreement condition of consensus. This contradiction concludes the proof of the lemma. 
\end{proof}

Next, we argue that no constant-time Consensus algorithm uses amortized polylogarithmic number of communication bits, per process.

\begin{lemma}
\label{prop:constant-time}
Any quantum algorithm solving Consensus in expected time $O(1)$ requires a polynomial number of messages per process (amortized), whp.
\end{lemma}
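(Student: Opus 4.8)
The plan is to argue by contradiction, converting the constant expected running time into a constant \emph{round horizon} $T=\cO(1)$ (by Markov's inequality, so that all processes have decided by round $T$ with probability bounded away from $0$ against every adversary), and then deriving a violation of the agreement property, which is required to hold with probability $1$. Suppose, for contradiction, that the amortized message complexity is sub-polynomial; fix a small constant $\epsilon=\Theta(1/T)$ (pinned down below) so that, against any fixed adversary, with high probability the total number of messages sent in the first $T$ rounds is below $n^{1+\epsilon}$; by a union bound this holds simultaneously for the polynomially many inputs considered below.

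The first ingredient is an \emph{input-oblivious degree-capping adversary} $\mathcal A$: it crashes a process the instant that process would deliver its $(B{+}1)$-st message, where $B:=200\,n^{\epsilon}$ (using that crashes need not be clean, $\mathcal A$ may drop exactly the surplus messages). On the high-probability event that the total is below $n^{1+\epsilon}$, the adversary crashes at most $n^{1+\epsilon}/B=n/200<n/3$ processes -- so it is admissible even for algorithms tolerating only a linear number of faults -- and, crucially, every process then sends at most $B$ messages over the whole execution, hence has per-round out-degree at most $B$.

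Now I would run a \emph{critical-flip} argument. Consider the chain of inputs $I_0,I_1,\dots,I_n$, where $I_k$ assigns $1$ to the first $k$ processes and $0$ to the rest; each $I_k$-execution under $\mathcal A$ is legal, so by agreement all processes decide a common value $\mathrm{dec}(I_k)$, and by validity $\mathrm{dec}(I_0)=0$ and $\mathrm{dec}(I_n)=1$; pick $k$ with $\mathrm{dec}(I_{k-1})=0\ne 1=\mathrm{dec}(I_k)$. Couple the $I_{k-1}$- and $I_k$-executions (same randomness, same $\mathcal A$) and let $S_t$ be the set of processes whose local state at the start of round $t$ differs between the two executions. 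Then $S_1=\{k\}$, and since a round-$t$ message that differs between the two executions must be sent by a process in $S_t$ (a process outside $S_t$ behaves identically, including its crash status, because its cumulative send count agrees) and each such sender reaches at most $B$ processes per round in each execution, $|S_{t+1}|\le(1+2B)\,|S_t|$; hence $|S_T|\le(1+2B)^{T-1}=\cO(B^{T})=\cO(n^{\epsilon T})$. Choosing $\epsilon=1/(10T)$ makes $|S_T|=\cO(n^{1/10})=o(n)$, so at least $n-o(n)-n/200\ge n/2$ processes lie outside $S_T$ and are uncrashed in both executions; each such process has an identical view throughout, hence decides $\mathrm{dec}(I_{k-1})=0$ in the $I_k$-execution as well -- but then not all processes decide $\mathrm{dec}(I_k)=1$ there, contradicting agreement.

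The step I expect to be the main obstacle is making the coupling in the critical-flip argument airtight, in two respects. First, one must verify that $\mathcal A$'s crash decisions are themselves coupled correctly -- that a process outside $S_t$ is crashed at the \emph{same} round in both executions -- so that "a differing message has a differing sender" is valid and the factor-$(1+2B)$ growth of the difference set genuinely holds. Second, the argument needs all processes (or at least a majority lying outside $S_T$) to have decided by the \emph{constant} round $T$; Markov's inequality yields this only with constant probability for a fixed input, and a naive union bound over the $n{+}1$ inputs of the chain fails, so one must either invoke the (standard, and true for the algorithms of this paper) fact that constant-expected-time consensus terminates within $\cO(1)$ rounds with high probability, or replace the full chain by a single flip located through a more careful probabilistic argument. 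The remaining estimates -- the Markov reduction, the bookkeeping for the degree cap, and the choice of $\epsilon$ -- are routine.
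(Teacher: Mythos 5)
Your route is genuinely different from the paper's. The paper iterates the isolation argument of Lemma~\ref{lem:worst-case-comm} round by round: an adaptive adversary keeps alive a large set of mutually non-communicating processes, and validity applied to (near-)solo executions forces two of them to decide differently unless their communication fan-out grows by a factor $n^{1/\tau}$ in some round, at which point the adversary lets the messages through. You instead use an input-oblivious degree-capping adversary plus a hybrid chain over inputs and a critical-flip locality bound $|S_T|\le (1+2B)^{T}$. The first obstacle you raise is not actually a problem: a process outside $S_t$ has the same cumulative send count in both coupled executions, so $\mathcal{A}$ crashes it at the same moment and the factor-$(1+2B)$ growth is sound.

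The genuine gap is the second issue you flag and then wave away. Your contradiction needs at least one endpoint of the critical pair $(I_{k-1},I_k)$ to have its processes \emph{decide by the constant horizon} $T$, and expected time $O(1)$ does \emph{not} imply termination in $O(1)$ rounds whp: a protocol succeeding in each constant-length phase with constant probability has constant expected time yet needs $\Theta(\log n)$ rounds to terminate whp, and the lemma is quantified over all algorithms with expected constant time, so you cannot appeal to a property of this paper's algorithms. Markov gives termination by $T=O(\tau)$ only with constant probability per execution; the critical index $k$ is a random variable correlated with the running times, so you would need a union bound over the $n+1$ hybrids, which forces $T=\Omega(n)$ and destroys the $(1+2B)^{T}=o(n)$ bound. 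This is exactly what the paper's structure sidesteps: it only ever compares a constant number of executions (a surviving process against its solo execution), where validity pins down the decision and Markov over $O(1)$ events suffices. To repair your argument you would have to replace the hybrid chain by such a validity-forcing pair (essentially reverting to the paper's mechanism) or give a new probabilistic argument that some adjacent flip in the chain has a fast endpoint; neither step is routine, so as written the proof does not go through.
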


\begin{proof}
Consider a Consensus algorithm working in expected $\tau$ number of rounds, for some $\tau=O(1)$.
By Markov inequality, the algorithm terminates by time $2\tau$ with probability at least $1/2$, for every strategy of the (adaptive) adversary.
We base on the idea in the proof of Lemma~\ref{lem:worst-case-comm}, but we do it for every round $r\le 2\tau$ and for number of sent messages in a round being $r\cdot n^{r/(2\tau)}$. Using recursive argument, we argue that there is a set $A_r$ of at least $\Omega(\frac{n}{r\cdot n^{r/(2\tau)}})$ processes (instead of just a pair of elements as in the proof of Lemma~\ref{lem:worst-case-comm}) that communicate with each other with probability $o(1)$, with already sending $\Omega(n^{(r-1)/(2\tau)})$ messages, amortized per process in this set.
In one of the rounds, however, a constant ($1/(2\tau)$) fraction of alive processes must increase their communication peers by factor of at least $n^{1/\tau}$, because in $\tau$ expected rounds (so also with constant probability) they need to contact $\Omega(n)$ other peers in some executions -- 
again, by the same argument as in the proof of Lemma~\ref{lem:worst-case-comm} (i.e., otherwise, we could find a pair of them, assign different values, and enforce decision on them with a constant probability, violating agreement condition). 
In such round, the adversary allows them to send their message without crashes, thus enforcing total number of 
$\Omega(\frac{n}{r\cdot n^{r/(2\tau)}} \cdot n^{(r-1)/(2\tau)} \cdot n^{1/\tau})=\Omega(n^{1+1/(2\tau)})$ messages, i.e., at least $\Omega(n^{1/(2\tau)})$ messages amortized per every process in the system. As this happens with a constant probability, the proof follows.
%
%
\end{proof}

\end{document}